\newtheorem{lem}{Lemma}
\newtheorem{theorem}{Theorem}
\newtheorem{defn}{Definition}
\def\mt{\mathbf{x}}
\def\mb{\mathbf}
\def\mc{\mathcal}
\begin{document}
\title{Networked estimation under information constraints$^*$ \thanks{$^*$A. Jadbabaie and U. Khan's research was supported by the following grants: ONR MURI N000140810747, NSF Career, AFOSR's Complex Networks Program and AFOSR MURI CHASE.}}
\author{Usman A. Khan$^\dagger$ and Ali Jadbabaie$^\S$
\thanks{
$^\dagger$Department of Electrical and Computer Engineering, Tufts University, {\texttt khan@ece.tufts.edu}. $^\S$Department of Electrical and Systems Engineering, University of Pennsylvania, {\texttt{jadbabai@seas.upenn.edu}}.}}

\maketitle

\begin{abstract}
In this paper, we study estimation of potentially unstable linear dynamical systems when the observations are distributed over a network. We are interested in scenarios when the information exchange among the agents is restricted. In particular, we consider that each agent can exchange information with its neighbors only once per dynamical system evolution-step. Existing work with similar information-constraints is restricted to static parameter estimation, whereas, the work on dynamical systems assumes large number of information exchange iterations between every two consecutive system evolution steps.

We show that when the agent communication network is sparely-connected, the sparsity of the network plays a key role in the stability and performance of the underlying estimation algorithm. To this end, we introduce the notion of \emph{Network Tracing Capacity} (NTC), which is defined as the largest two-norm of the system matrix that can be estimated with bounded error. Extending this to fully-connected networks or infinite information exchanges (per dynamical system evolution-step), we note that the NTC is infinite, i.e., any dynamical system can be estimated with bounded error. In short, the NTC characterizes the estimation capability of a sparse network by relating it to the evolution of the underlying dynamical system.
\end{abstract}

\section{Introduction}\label{intro}
Existing approaches to decentralized estimation or social learning are restricted to either (i) static (a fixed number) parameter estimation~\cite{degroot:74,Xiao05ascheme,Schizas08giannakis,JadSanTah10,5373900}; or, (ii) when the parameter of interest is dynamic, the estimation demands a large number of agent communications within each dynamical system evolution step\footnote{With the exception of~\cite{4739167,NBERw14040,ozd_games}, in which the parameter is a scalar and each agent is assumed to observe this scalar state. Clearly, the scalar estimation problem has each agent observable, whereas we do not assume (i) local observability; or (ii) scalar states.}~\cite{olfati:05,zamp:07,usman_tsp:07,Msechu:08}. In this paper, we consider networked estimation of (vector) dynamical systems in the context of \emph{single time-scale estimation}, i.e., when each agent can communication only once with its neighbors per dynamical system evolution step~\cite{khan_jad:cdc10}. Interested readers may also see~\cite{tam_arx}, where identity system and observation matrices are considered.

We consider social learning and networked estimation where the state of the world follows a potentially unstable discrete-time linear dynamical system. The dynamical system is monitored by a network of agents that have linear noisy observations and the agents are able to communicate over a sparse topology. Estimation within such settings heavily relies on the communication time-scale among the agents~\cite{khan_jad:cdc10}. To motivate this, we refer to Fig.~\ref{cps_ts_1}, where Fig.~\ref{cps_ts_1}(a) shows the traditional networked estimation approach, e.g., the Kalman-consensus filter~\cite{olfati:05}, where a large-number of consensus iterations are required between each~$k$ and~$k+1$;~$k$ being the time-index of the discrete-time dynamical system. On the other hand, our approach implements the agent communication and sensing at the same time-index,~$k$, of the dynamical system; we show this in Fig.~\ref{cps_ts_1}(b).
\begin{figure}
\centering
\includegraphics[width=4.5in]{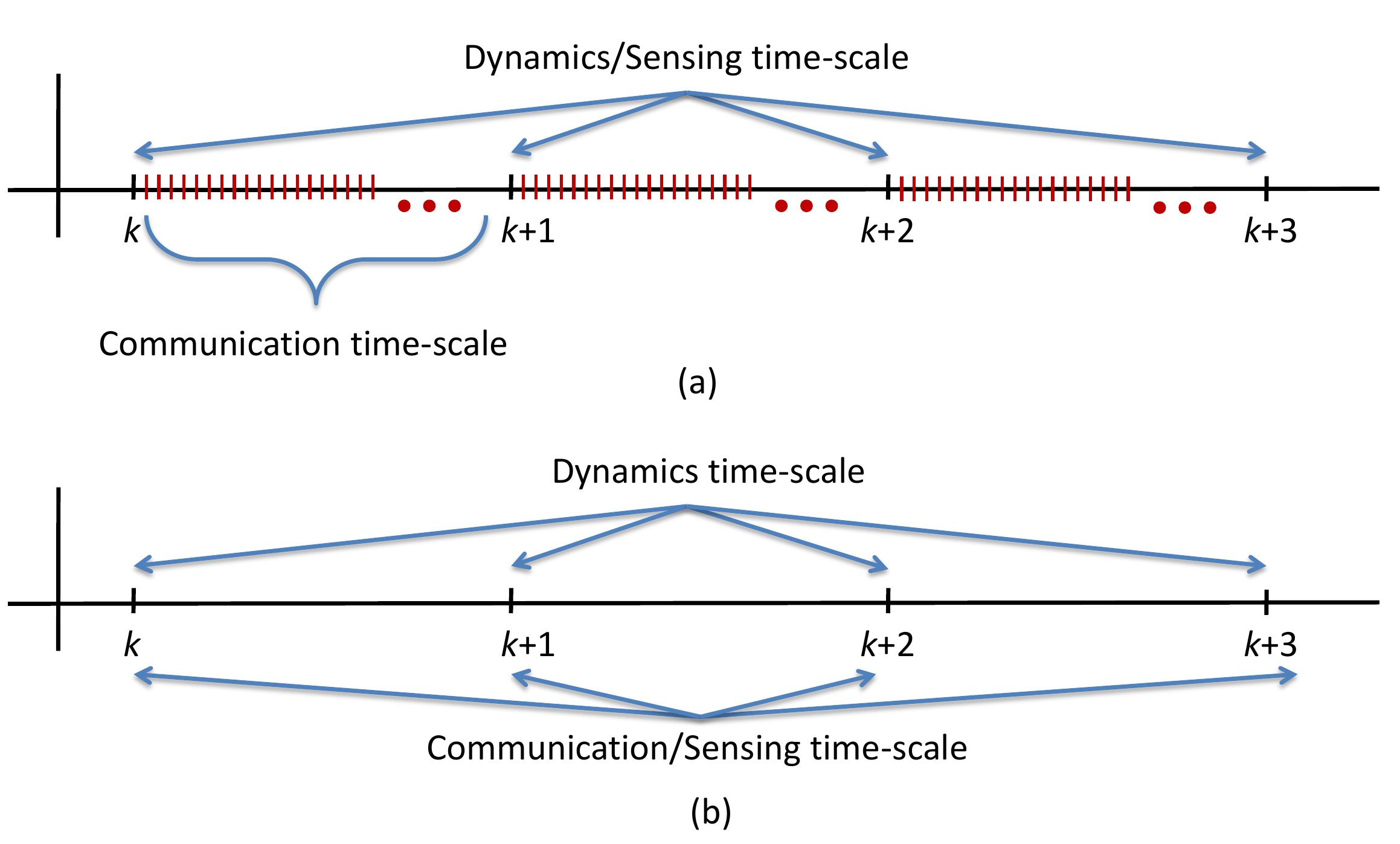}
\caption{Time-scales of dynamics, sensing and communication: (a) Large number of communication; (b) Communication and sensing at the time-scale of the dynamics.}
\label{cps_ts_1}
\end{figure}

We assume that the agents are \emph{collectively-observable}, i.e., the collection of all of the agent observations guarantees observability of the underlying state. It turns out that the collective-observability, although necessary, is not sufficient when we restrict to single time-scale estimation. This is due to the fact that the dynamics can be faster than the rate of observation fusion supported by a sparse network. The network connectivity and observation structure should be therefore related to the system instability. In this context, we address the networked estimation problem from two sides. \emph{Firstly}, we formulate the estimation problem as a spectral radius (largest eigenvalue) optimization of concerned matrices. Borrowing results from Lyapunov theory and Linear Matrix Inequalities (LMIs), we show that in the networked setting, the LMI formulation reduces to a bi-(multi-)linear optimization and may not have a solution. In cases when the LMIs do have a solution, we provide an iterative optimization procedure based on a cone complementarity linearization algorithm~\cite{rami:97}.

\emph{Secondly}, we formulate the estimation problem as a (induced) two-norm (largest singular value) optimization of the underlying matrices. We note here that the two-norm approximation of the spectral radius can be conservative; however, this design process leads to insightful networked estimation arguments that are not formulated and/or explored before. In particular, we introduce the notion of \emph{Network Tracking Capacity} (NTC), which quantifies the most unstable system that a network and a set of given observation models can track with bounded mean squared error~(MSE)\footnote{This notion can be related to the rate-constraints in information theory (the rate of sending information should be less that the channel capacity) and also to control under communication constraints (Mitter-Tatikonda~\cite{tat1:04, tat1:08}).}. We further explore the two-norm procedure with scalar-gain estimators (see structured singular value and stability margins in robust control theory~\cite{paganini_book}), and provide local design procedures using graph isomorphisms and eigenvalue bounds.

In the context of single time-scale estimators, Reference~\cite{kar-moura-ramanan-IT-2008} provides an algorithm for static parameter estimation, (see also~\cite{4749425} where a distributed sub-gradient optimization is considered). On the other hand, the distributed estimation algorithm in~\cite{4739167} considers scalar neutrally stable (unit system spectral radius) dynamical systems. Clearly, scalar dynamical systems are observable at each agent as each agent observing the scalar state is observable. This paper extends these works as we consider a multi-dimensional state-space that may not be locally observable at any strict subset of agents. Recent work on distributed estimation also includes a series of papers by Olfati-Saber et. al. (\cite{4118472,4434303,4282261,5399678,5990979}), where neighborhood (each agent plus its one-hop connected neighbors) observability is assumed. On the other hand, we do not assume observability of any agent or any strict subset of agents.

The distributed estimation algorithm we propose is related to the parallel estimators in~\cite{4287150} for closed-loop dynamics of cooperative vehicle formations. The overall framework in~\cite{4287150} is communication-oriented; the authors design the communication gain and receiver sensitivity for the parallel estimators. In addition, they also consider network topology design using a heuristic framework. There are also interesting parallels between this work and Reference~\cite{fax-murray:02}. In the special case where observation models are chosen to be identical (which is not interesting in a distributed estimation context), our proposed estimator is the dual of the controller in~\cite{4287150}. Clearly, when all of the observation models are identical, each observation model has to be observable and designing the overall system reduces to the design of a single sub-system at any agent.

The paper is organized is as follows: Preliminaries and notation are in Section~\ref{prelim}. Section~\ref{ps} states the problem and the single time-scale distributed estimation algorithm. We consider spectral radius estimator design in Section~\ref{srd} and the two-norm estimator design in Section~\ref{ntc}, whereas, Section~\ref{perf} considers error performance of the estimator. Section~\ref{sge} considers scalar gain estimators, whereas Section~\ref{ld} explores local design of the estimator parameters. Finally Section~\ref{exam} provides an illustration and Section~\ref{conc} concludes the paper.

\section{Preliminaries}\label{prelim}
This section provides preliminaries and sets notation.

\emph{System model}: Consider a time-varying parameter,~$\mt_k\in\mathbb{R}^n, n>1$, that corresponds to a phenomenon of interest. We assume the following discrete-time linear dynamical system for~$\mt_k$:
\begin{eqnarray}\label{sys1}
\mt_{k+1} = A\mt_{k} + \mb{v}_{k},
\end{eqnarray}
where~$k$ is the discrete-time index,~$A\in\mathbb{R}^{n\times n}$ is a potentially unstable system matrix, and~$\mb{v}_{k}$ is the noise in the system evolution such that
\begin{eqnarray}\label{nass1}
\mathbb{E}[\mb{v}_k] = 0,\qquad\mathbb{E}\left[\mb{v}_k\mb{v}_k^T\right] = V.
\end{eqnarray}
We assume that the phenomenon of interest~\eqref{sys1} is monitored by a network of~$N$ agents. The observation model at the~$i$th agent is given by
\begin{eqnarray}\label{sys2}
\mb{y}^i_k = H_i\mt_k + \mb{r}^i_k,
\end{eqnarray}
where the agent measurements can be collected to form a global observation, i.e.,
\begin{eqnarray}
\mb{y}_k &=& H\mb{x}_k + \mb{r}_k,\\
\triangleq\left[
\begin{array}{c}
\mb{y}_k^1\\
\vdots\\
\mb{y}_k^N
\end{array}
\right] &=&
\left[
\begin{array}{c}
H_1\\
\vdots\\
H_n
\end{array}
\right]\mb{x}_k +
\left[
\begin{array}{c}
\mb{r}_k^1\\
\vdots\\
\mb{r}_k^N
\end{array}
\right]
\end{eqnarray}
We assume
\begin{eqnarray}\label{nass2}
\mathbb{E}[\mb{r}^i_k] = 0,\qquad\mathbb{E}\left[\mb{r}^i_k\mb{r}^{iT}_k\right] = R_i,\qquad\mathbb{E}[\mb{r}_k\mb{r}_k^T]=R.
\end{eqnarray}
The noise sequences,~$\{\mb{v}_k\}_{k\geq0}$ and~$\{\mb{r}_k^i\}_{k\geq0}^{1\leq i\leq N}$, are statistically independent over time. We assume the system to be observable with the global observations, i.e., the pair~$(A,H)$ is observable. Note that any strict subset of agents may not be necessarily observable.

\emph{System instability}: We characterize the stability of the state dynamics in terms of the induced 2-norm of the system matrix,~$A$, as opposed to the spectral radius\footnote{The induced two-norm (largest singular value) and spectral radius (largest eigenvalue) are identical for normal system matrices. For non-normal system matrices, the induced two-norm is an upper bound on the spectral radius and may be a conservative estimate of stability. However, the convexity of the two-norm makes it tractable for optimization.}, i.e.,
\begin{eqnarray}\label{stab1}
a\triangleq\|A\|_2=\sqrt{\lambda_n(A^TA)},
\end{eqnarray}
where~$0\leq\lambda_1(\cdot)\leq\ldots\leq\lambda_n(\cdot)$ are the eigenvalues of the symmetric positive (semi) definite matrix~$A^TA$.

\emph{Network connectivity}: The interactions among the agents are modeled with an undirected graph,~$\mathbb{G}=(\mathcal{V},\mathcal{E})$, where~$\mathcal{V}= \{1,\ldots,N\}$ is the set of vertices and~$\mathcal{E}\subseteq \mathcal{V}\times \mathcal{V}$ is a set of ordered pairs describing the interconnections among the agents. The neighborhood at the~$i$th agent is defined as
\begin{eqnarray}\nonumber
\mathcal{N}_i\triangleq \{i\} \cup \{j~|~(i,j) \in \mathcal{E}\}.
\end{eqnarray}
The matrix~$L$ denotes the Laplacian matrix of~$\mathbb{G}$. For details on graph-theoretic concepts, see~\cite{bela_book}.

\section{Problem formulation: Single time-scale estimator}\label{ps}
In this section, we present our problem formulation. In particular, we present a single time-scale networked  estimator that is implemented at each agent to estimate the time-varying state,~$\mt_k$, of the underlying phenomenon~\eqref{sys1}. Let~$\widehat{\mb{x}}^i_{k+1}\in\mathbb{R}^n$ denote the estimate of~$\mt_k$ at agent~$i$ and time~$k+1$, given by
\begin{eqnarray}\label{est1}
\begin{array}{c}
\widehat{\mb{x}}^i_{k+1} = \\
\mbox{ }
\end{array}
\begin{array}{c}
A\left(\underbrace{\sum_{j\in\mathcal{N}_i} w_{ij}\widehat{\mb{x}}^j_k } + B_i\underbrace{\sum_{j\in\mathcal{N}_i}  H_j^T\left(\mb{y}_k^j-H_j\widehat{\mb{x}}^i_k\right)}\right),\\
\mbox{consensus update}\qquad~~\mbox{innovation update}
\end{array}
\end{eqnarray}
for~$w_{ij}\in\mathbb{R}_{\geq 0}$ such that~$\sum_{j\in\mathcal{N}_i}w_{ij}=1,\forall~j$, and~$B_i\in\mathbb{R}^{n\times n}$. At each agent, the consensus update averages the state estimates over its neighbors. On the other hand, the innovation update at agent~$i$ collects the observations at agent~$i$ and its neighbors ($j\in\mathcal{N}_i$) and uses its estimate,~$\widehat{\mb{x}}_k^i$, to form the innovation, where~$B_i$ is the local innovation gain\footnote{Static parameter estimation algorithms that are \emph{structurally similar} to~\eqref{est1} have also been considered in~\cite{4545274,sayed:arx}. In particular, \cite{sayed:arx} shows that the specific consensus and innovation structure naturally arises if the networked mean-squared cost is assumed to be the sum of local costs.}. Notice that in~\eqref{est1}, both the consensus step and the innovation step are implemented at the same time-scale, see Fig.~\ref{cps_ts_1}.

The local error process,~$\mb{e}_{k+1}^i\in\mathbb{R}^n$, at agent~$i$ and time~$k+1$ is defined as
\begin{eqnarray}\nonumber
\mb{e}_{k+1}^i \triangleq \widehat{\mb{x}}_{k+1}^i - \mt_{k+1}.
\end{eqnarray}
By concatenating the local error processes, we get the network error process,~$\mb{e}_{k+1}\in\mathbb{R}^{nN}$,
\begin{eqnarray}\label{err_pr_not}
\mb{e}_{k+1} \triangleq \left[(\mb{e}^1_{k+1})^T,\ldots,(\mb{e}^N_{k+1})^T\right]^T.
\end{eqnarray}
Let\footnote{Note that since~$w_{ij}\in\mathbb{R}_{\geq 0}$ and~$\sum_{j\in\mathcal{N}_i}w_{ij}=1,\forall~j$,~$W$ is a stochastic matrix.}~$W \triangleq \{w_{ij}\}$,~$B \triangleq \mbox{diag}[B_1,\ldots,B_N]$, and
\begin{eqnarray}\label{DH}
D_H \triangleq
\left[
\begin{array}{ccc}
\sum_{j\in\mathcal{N}_1} H_j^TH_j&&\\
&\ddots&\\
&&\sum_{j\in\mathcal{N}_N} H_j^TH_j
\end{array}
\right].
\end{eqnarray}
It can be verified that the network error process,~$\mb{e}_{k+1}$, is given by
\begin{eqnarray}\label{err_eq3}
\mb{e}_{k+1} = P\mb{e}_k + \mb{u}_k,
\end{eqnarray}
where $P\triangleq(I_N\otimes A)(W \otimes I_n - BD_H)$, $\mb{u}_k \triangleq \mb{\phi}_k - \mb{1}_N\otimes \mb{v}_k$, and
\begin{eqnarray}\nonumber
\mb{\phi}_k &\triangleq&(I_N\otimes A)B\left[
\begin{array}{c}
\sum_{j\in\mathcal{N}_1} H_j^T\mb{r}_k^j\\
\vdots\\
\sum_{j\in\mathcal{N}_N} H_j^T\mb{r}_k^j
\end{array}
\right].
\end{eqnarray}

The design and analysis of the single time-scale networked estimator~\eqref{est1} concerns with studying the stability and steady-state performance of the networked error process~\eqref{err_eq3}. We take the following approach in this regard:
\begin{enumerate}[(i)]
\item Estimator design using LMIs (Section~\ref{srd}): We study the design of the matrices~$W$ and~$B$ such that~$\rho(P)<1$ (stable estimation error) using an LMI-based approach. Although this approach requires minimal assumptions, it does not guarantee the existence of~$W$ and~$B$ with~$\rho(P)<1$. This is because the underlying structure on~$W$ (graph sparsity) and~$B$ (block-diagonal) restricts the solution of the corresponding Linear Matrix Inequality (LMI).
\item Two-norm design (Section~\ref{ntc}): We use the fact that~$\rho(P)<\|P\|_2$ and design the matrices~$W$ and~$B$ such that~$\|P\|_2<1$, i.e., instead of using the spectral radius constraint for stability, we resort to a two-norm constraint. This relaxation leads to the Network tracking Capacity (NTC) -the most unstable dynamical system that can be estimated with bounded estimation error under the structural constraints on~$W$ and~$B$- arguments.
\item Performance (Section~\ref{perf}): We study the performance of the networked estimator in terms of upper bounds on the steady-state error covariance.
\item Scalar gain estimators (Section~\ref{sge}): We study the scalar gain estimator by restricting~$W=I_N-\alpha L$ and~$B=\alpha I_{nN}$. Hence, instead of designing the entire matrices,~$W$ and~$B$, we only consider a scalar parameter,~$\alpha\in\mathbb{R}$, leading to local design of~$\alpha$ at each agent.
\item Local design (Section~\ref{ld}): We consider local design of the scalar-gain estimator parameters, where each agent can locally choose the scalar design parameter,~$\alpha$; we achieve this by using graph isomorphisms and some eigenvalue bounds.
\end{enumerate}

\subsection{Special cases}
Before we proceed with the rest of the paper, we illustrate the applicability of the proposed estimator in~\eqref{est1} to some special cases.

\subsubsection{Scalar systems with no collaboration} For scalar dynamical systems ($n=1$) with no collaboration (i.e., without the consensus update,~$w_{ii}=1$ and~$w_{ij}=0, i\neq j$), the error (at the~$i$th agent) in the proposed estimator is given by
\begin{eqnarray}\nonumber
\widehat{e}_{k+1}^i &=& \underbrace{ax_k + v_k}_{x_{k+1}} - a \left(\widehat{x}_k^i + b_i(k)h_i(h_ix_k+r_k^i - h_i\widehat{x}_k^i)\right),\\\nonumber
&=& a\left(1 - b_i(k)h_i^2\right)e_k^i + v_k - ab_i(k)h_ir_k^i.
\end{eqnarray}
With no collaboration (no interagent interaction), the above error dynamics lead to a bounded MSE by choosing~$b_i(k)=h_i^{-2}$ and the error dynamics become
\begin{eqnarray}
\widehat{e}_{k+1}^i &=& v_k - ah_i^{-1}r_k^i,
\end{eqnarray}
regardless of the value of~$a$. In particular, assuming independent system and observation noise and~$E[v_kv_k^T] = \sigma_v^2,~\mathbb{E}[r_k^ir_k^{iT}]=\sigma_r^2$, we have
\begin{eqnarray}
\mathbb{E}[e_k^ie_k^{iT}] &=& \sigma_v^2 + ah_i^{-2} \sigma_r^2.
\end{eqnarray}
This is because the scalar system is observable at each agent~$i$ as long as~$h_i\neq 0$. Hence, collaboration is not required to guarantee bounded MSE and the proposed estimator results in bounded MSE for any\footnote{We note that non-trivial choices of the estimator gain,~$b_i(k)$, different from~$h_i^{-2}$, exist that also minimize the steady-state error at the price of slower convergence. With the choice of~$h_i^{-2}$, we note that the convergence is achieved in one-step; choosing~$b_i=h_i^{-2}$ is sufficient to show that a bounded MSE does not require collaboration among the agents in the case of scalar dynamical systems.}~$a$. However, with the addition of inter-agent information exchange, the steady performance of the estimator can be improved as we show next.

\subsubsection{Scalar systems with collaboration}\label{sc:nss}
When the agents interact over a sparse communication network, it can be shown that the error at the~$i$th agent is given by
\begin{eqnarray}
e_{k+1}^i
&=& a\left(\sum_{j\in\mc{N}_i}w_{ij} e_k^j - b_i(k)\sum_{j\in\mc{N}_i}h_j^2e_k^i\right) + v_k - ab_i\sum_{j\in\mc{N}_i}h_jr_k^j,
\end{eqnarray}
where we used~$\sum_{j\in\mc{N}_i}w_{ij}=1$. Due to the interaction, the error at each any agent is coupled with the neighboring error dynamics (and subsequently to errors at all of the nodes in the network assuming a connected network.) From the previous discussion, the networked error process as \begin{eqnarray}
\mb{e}_{k+1} = a\left(W-\beta(k)I_N\right)\mb{e}_k + \mbox{blockdiag}\left\{v_k - a\beta(k)r_k^j\right\},
\end{eqnarray}
where we have chosen~$b_i(k) = \beta(k)(\sum_{j\in\mc{N}_i}h_j^{2})^{-1}$; the inverse exists if~$h_i\neq0,~\forall i$. Note that all of the possible choices of~$\beta(k)$ such that~$\rho(a\left(W-\beta(k)I_N\right))<1$, result into bounded MSE of the networked estimator. The estimator design, thus, depends on the weights chosen in~$W$ (stochastic) and~$\beta(k)$. The error dynamics further highlight that there exist unstable dynamics ($a>1$) such that our proposed estimator results into bounded MSE. The unstable dynamics result from the stability margin in~$a\left(W-\beta(k)I_N\right)$.

In a further special case of neutrally stable system, i.e.,~$a=1$, Reference~\cite{4739167} shows that~$\beta(k)$ can be chosen such that~$\beta(k)$ follow some persistence and diminishing conditions. In other words,~$\beta(k)$ is less than the (some) weight chosen in~$W$, but they sum to infinity\footnote{Such choices are motivated by standard stochastic approximation techniques in recursive algorithms~\cite{Kushner,Kushner-Yin,kush_book}. These algorithms have been explored for non-scalar static systems in the context of distributed localization and consensus, for details see~\cite{karrandomtopologynoise,usman_loctsp:08,usman_slretsp:09}.}. The advantage of choosing a diminishing gain,~$\beta(k)\rightarrow 0$, is that the constant in the networked error expression reduces to~$v_k$. Using this observation combined with an assumption of \emph{diminishing innovations}, i.e.,~$v_k\rightarrow 0$,~\cite{4739167} shows that~$\mb{e}_k\rightarrow \mb{0}$. However, an avid reader may note that the persistence assumptions on~$\beta(k)$ can only be made when~$a\leq 1$. In other words, for a non-trivial system ($a>1$), we cannot choose~$\beta(k)\rightarrow 0$.

\subsubsection{Social networks--Static parameter estimation}
Static parameter estimation ($A=I, v_k=0$) and aggregation behavior has been studied extensively in social networking applications in both Bayesian~\cite{sn3,sn4} and non-Bayesian settings~\cite{sn1,sn2,JadSanTah10}, for more details see references therein. In particular,~\cite{JadSanTah10} considers a scalar parameter to be estimated using belief propagation adding a network (consensus) term to the DeGroot model~\cite{degroot:74}. In an abstract setting, Reference~\cite{kar-moura-ramanan-IT-2008} discusses a single time-scale estimator for static vector parameters that is close to our formulation (when restricted to the static case) in~\eqref{est1}.

\section{Estimator design using LMIs}\label{srd}
In this section, we consider the estimator design using an LMI based approach. Recall that we would like to choose~$W$ and~$B$ such that (see~\eqref{err_eq3})
\begin{eqnarray}\label{rho_des}
\rho(W\otimes A - (I_N\otimes A)BD_H) < 1.
\end{eqnarray}
The spectral radius is a non-convex function (unless the matrix is normal that may not be true in our formulation). Due to this, a tractable optimization is not not possible to \emph{minimize}~$\rho(\cdot)$. However, assuming a full gain matrix,~$B$, one may implement an LMI-based design that satisfies~\eqref{rho_des} (this design may not minimize the spectral radius) as long as the pair~$(W\otimes A, D_H)$ is observable; see~\cite{lmi_book},\cite{paganini_book} for details. We note here that in our formulation the gain,~$B$, has structural constraints (block-diagonal); the corresponding LMIs, in general, do not have a solution. Clearly, this is the main difficulty in distributed estimation and control as convex/semidefinite approaches are not directly applicable. To this end, we resort to an iterative procedure to solve LMIs under structural constraints.

Before we proceed with our methodology, we note that this formulation assumes the observability of~$(W\otimes A,D_H)$. Note that even when we assume~$(A,H)$-observability, we still have to choose an appropriate~$W$ that ensures the observability of~$(W\otimes A,D_H)$ in order to use LMIs. To this end, References~\cite{khan_mrd:asil11,khan_jad:cdc11,khan_mrd:camsap11}, discuss agent connectivity protocols to ensure \emph{structured observability} of~$(W\otimes A,D_H)$ given the observability of~$(A,H)$. These protocols are independent of a particular fusion rule, i.e., a strict choice of elements in~$W$ is not enforced as long as the sparsity (zero/non-zero pattern) is fixed\footnote{In fact, it can be shown that if~$(W\otimes A,D_H)$ is observable for \emph{some} choice of~$W$, then it is observable for almost all~$W$'s with the same sparsity. The set of~$W$'s where it is not observable has zero Lebesgue measure~\cite{woude:03}.}. In the following, we assume that~$(W\otimes A,D_H)$ is observable by designing an appropriate communication network among the agents as discussed in~\cite{khan_mrd:asil11,khan_jad:cdc11,khan_mrd:camsap11}.

\subsection{LMIs under structural constraints}
Our method uses a cone complementarity linearization algorithm provided in~\cite{rami:97} (see Wireless Control Network (WCN) in~\cite{5717159}). We describe the procedure below. Define
\begin{eqnarray}
\widehat{A} \triangleq (W\otimes A - (I_N\otimes A)BD_H).
\end{eqnarray}
From Lyapunov theory~\cite{paganini_book}, it is well-known that~$\rho(\widehat{A}) < 1$ when the following Linear Matrix Inequality (LMI) holds for some~$X\succ 0$ (`$\succ$' denotes positive-definiteness),
\begin{eqnarray}\nonumber
X &-& \widehat{A}^TX\widehat{A} \succ 0,
\end{eqnarray}
or, equivalently, when
\begin{eqnarray}\nonumber
\left[
\begin{array}{cc}
X& \widehat{A}^TX,\\
X\widehat{A} &X
\end{array}
\right] \succ 0,
\end{eqnarray}
with~$X\succ 0$, by using the Schur complement. Since the above is non-linear in the design parameter (products of~$X$,$W$ and~$X$,$B$), we perform a change of variable and note that~$\rho(\widehat{A})<1$, if and only if there exists~$X,Y\succ 0$ such that
\begin{eqnarray}\label{lin_lmi}
\left[
\begin{array}{cc}
X& \widehat{A}^T,\\
\widehat{A}& Y
\end{array}
\right] \succ 0,
\end{eqnarray}
with~$X=Y^{-1}$. The above LMI~\eqref{lin_lmi} is linear in the design parameters~$X, W$, and~$B$, but the corresponding constraint,~$X=Y^{-1}$, is non-convex.

Here, we use the approach in~\cite{rami:97} to approximate~$X=Y^{-1}$ with a linear function. In particular, the matrices,~$X,Y\succ 0$, satisfy~$X=Y^{-1}$, if and only if they are optimal points of the following optimization problem.
\begin{eqnarray}\nonumber
\min~tr(XY) \mbox{ subject to }
\left[
\begin{array}{cc}
X& I\\
I& Y
\end{array}
\right]\succeq 0,
\end{eqnarray}
with~$X,Y\succ 0$. Furthermore,~$W$ has to follow the graph,~$\mathbb{G}$, sparsity, and~$B$ is block-diagonal. The above discussion can be summarized in the following lemma.
\begin{lem}
The networked estimator is stable when structured matrices,~$W$ and~$B$, are the solution of the following optimization.
\begin{eqnarray}\label{min_op}
\mbox{min } tr(XY),&&\\\nonumber
\left[
\begin{array}{cc}
X& \widehat{A}^T,\\\nonumber
\widehat{A}& Y
\end{array}
\right] \succ 0,&&
\left[
\begin{array}{cc}
X& I,\\\nonumber
I& Y
\end{array}
\right]\succeq 0,\\\nonumber
B\mbox{ is block-diagonal},&&\\\nonumber
W\mbox{ is stochastic and }&& W\sim\mathbb{G},\\\nonumber
X,Y&\succ& 0.
\end{eqnarray}
\end{lem}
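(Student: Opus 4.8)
\medskip
\noindent\textbf{Proof plan.} The plan is to chain together the three reductions already spelled out above — the discrete-time Lyapunov inequality, the Schur-complement linearization, and the cone-complementarity characterization of $X=Y^{-1}$ — and then to observe that the structural requirements (``$B$ block-diagonal'', ``$W$ stochastic with sparsity pattern $\mathbb{G}$'') are harmless for each reduction because they are affine constraints that only shrink the feasible set. The first step is to translate ``stable estimator'' into a spectral condition on $\widehat{A}$: note that $P=\widehat{A}$ by the mixed-product property of the Kronecker product, so the network error obeys the affine recursion \eqref{err_eq3}, $\mb{e}_{k+1}=\widehat{A}\,\mb{e}_k+\mb{u}_k$, in which $\mb{u}_k$ is zero-mean with a fixed, bounded covariance (a linear image of $\mb{v}_k$ and the $\mb{r}^i_k$). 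Hence $\mathbb{E}[\mb{e}_k\mb{e}_k^T]$ remains bounded for every initialization exactly when $\rho(\widehat{A})<1$, so it suffices to produce a Lyapunov certificate for the returned pair $(W,B)$.

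Next I would invoke the standard discrete Lyapunov lemma: $\rho(\widehat{A})<1$ iff there is $X\succ 0$ with $X-\widehat{A}^TX\widehat{A}\succ 0$, which by a Schur complement is the $2\times2$ block inequality displayed in the text with $X$ on the diagonal and $X\widehat{A}$, $\widehat{A}^TX$ off the diagonal. Since this is bilinear in the unknowns $X,W,B$, I would apply the congruence by $\mathrm{diag}(I,X^{-1})$ and set $Y=X^{-1}$ to reach the LMI \eqref{lin_lmi}, which is affine in $X,Y,W,B$ but carries the nonconvex side relation $X=Y^{-1}$. Adjoining ``$B$ block-diagonal'' and ``$W$ stochastic, $W\sim\mathbb{G}$'' — affine constraints, hence compatible with every equivalence used so far — gives precisely the feasible set of \eqref{min_op} together with the cone inequality.

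Finally I would dispose of $X=Y^{-1}$ by the cone-complementarity argument of \cite{rami:97}: if $X,Y\succ 0$ and the block matrix with $X,Y$ on the diagonal and identity blocks off the diagonal is positive semidefinite, then $X\succeq Y^{-1}$, so every eigenvalue of $XY$ (equivalently of $Y^{1/2}XY^{1/2}$) is at least $1$; therefore $\mathrm{tr}(XY)\ge nN$, with equality iff $X=Y^{-1}$. Consequently, whenever \eqref{min_op} is driven to its lower value $nN$ at a feasible $(X^\star,Y^\star,W^\star,B^\star)$, we have $X^\star=(Y^\star)^{-1}\succ 0$, and \eqref{lin_lmi} at this point is exactly the Schur-complement form of $X^\star-\widehat{A}^TX^\star\widehat{A}\succ 0$; hence $\rho(\widehat{A})<1$ and the estimator has bounded MSE, with explicit Lyapunov function $\mb{e}^TX^\star\mb{e}$ after undoing the change of variable.

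The main obstacle is precisely this ``solved to value $nN$'' proviso. The objective $\mathrm{tr}(XY)$ is bilinear, so \eqref{min_op} is nonconvex; the linearization iteration of \cite{rami:97} only guarantees a stationary point, and the lemma's conclusion is in force only when that point actually attains $X=Y^{-1}$. Worse, because $W$ must respect the graph sparsity and $B$ must be block-diagonal, the global minimum of $\mathrm{tr}(XY)$ can strictly exceed $nN$ — no structured stabilizing $(W,B)$ exists at all — which is the structural-feasibility gap flagged in Section~\ref{srd} and exactly what motivates the two-norm relaxation of Section~\ref{ntc}. So the precise reading of the statement is conditional: if the program \eqref{min_op} is solved to optimal value $nN$, the resulting structured $(W,B)$ makes $\rho(\widehat{A})<1$ and hence the networked estimator stable.
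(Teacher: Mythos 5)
Your proposal is correct and follows essentially the same route as the paper, which presents this lemma as a summary of the preceding discussion: the discrete Lyapunov inequality, the Schur-complement linearization with the change of variable $Y=X^{-1}$, and the cone-complementarity replacement of $X=Y^{-1}$ by the trace minimization of~\cite{rami:97}. Your added observations---that $P=\widehat{A}$ by the mixed-product property, that $\mathrm{tr}(XY)\ge nN$ with equality iff $X=Y^{-1}$, and that the conclusion is therefore conditional on the program attaining this value (which the structural constraints on $W$ and $B$ may prevent)---are all consistent with, and indeed make explicit, the caveats the paper itself states immediately after the lemma.
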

The objective function is again a minimization of a non-convex object. (Notice that since the second LMI is equivalent to~$X=Y^{-1}$~\cite{lmi_book}, the minimum trace is achieved at~$X=Y^{-1}$ and the optimal value is~$nN$.) To this end, we replace the trace operator over the product of~$X$ and~$Y$ with a linear approximation~\cite{pang:95,rami:97},
\begin{eqnarray}
\phi_{\mbox{lin}} (X,Y) = tr (Y_0 X + X_0 S),
\end{eqnarray}
and an iterative algorithm can be used to minimize~$tr(XY)$, under the constraints on~$W$ and~$B$. The iterative algorithm~\cite{rami:97} is as follows:
\begin{enumerate}[(i)]
\item Find feasible points~$X_0, Y_0, W, B$. If no such points exist, Terminate.

\item Find~$X_{t+1},Y_{t+1}$ by minimizing~$tr (Y_t X + X_t Y)$ under the constraints in~\eqref{min_op}.

\item Terminate when~$\rho(\widehat{A})<1$ or according to a desirable stopping criterion.
\end{enumerate}

\subsection{Discussion}
In the following, we briefly review the LMI-based spectral radius design and discuss its limitations.

\begin{inparaenum}[(i)]
\item Let~$s_{t+1} = tr(Y_t X_{t+1}) + X_{t+1} Y)$, then it is shown in~\cite{rami:97} that~$s_t$ is a decreasing sequence that converges to~$2nN$; the convergence to~$2nN$ is because as~$t\uparrow$,~$X_{t+1}\rightarrow Y^{-1}$ and~$X\in\mathbb{R}^{nN\times nN}$. However, there is no analytical characterization of the convergence rate.

\item An alternative stopping criterion, instead of (iii), can be established in terms of reaching within~$2nN + \varepsilon$ of the trace objective.

\item The iterative procedure given above, similar to the cone-complementarity linearization algorithm in~\cite{rami:97}, is a centralized algorithm and has to be implemented at a center. However, the center has to implement this process only once, off-line, and then it may broadcast the appropriate estimator gain to each agent. Afterwards, the center plays no role in the implementation of local estimators at each agent; each agent, subsequently, observes and performs \emph{in-network} operations to implement the estimator.

\item A single time-scale algorithm can also be implemented, where the above iterative procedure is implemented at the same time-scale~$k$ as of the dynamical system in~\eqref{sys1}. With this approach, the estimator gain becomes a function of~$k$ (i.e.,~$K_{k+1}$) and may be transmitted to each agent at each time-step~$k$. This is helpful when the implementation is assumed in real-time.

\item The LMI based spectral radius design provided in this section guarantees a solution if feasible points point,~$X_0, Y_0, W, B$, exist that satisfy the constraints of the minimization in~\eqref{min_op}. If no such points can be obtained then the spectral radius design does not result in a solution. Hence, when a center-based implementation is not feasible or a solution does not exist using the LMI-based methods, we resort to the two-norm design procedure described in the next section.
\end{inparaenum}

\section{Estimator design under a convex relaxation}\label{ntc}
As we explained before, the LMI-based design is centralized and may not result in a stable error process. To address such cases, we use a convex relaxation, $\rho(P)\leq\|P\|_2$, and provide a design that is based on~$\|P\|_2<1$ to ensure a stable error process. To this end, we characterize the Network Tracking Capacity (NTC) that quantifies the most unstable dynamical system, which a network and a set of given observation models may track with bounded error. With the help of the previous discussion, we have the following definition.

\begin{defn}\label{ntc_def}[Network tracking capacity,~$C$] Given the network connectivity, i.e., the network communication graph,~$\mathbb{G}$, and the observation matrices,~$\{H_i\}_{1\leq i\leq N}$, the network tracking capacity,~$C$, is defined as the most unstable dynamical system (in the~$2$-norm sense) that can be estimated by~\eqref{est1} with bounded mean-squared error (MSE).
\end{defn}

In the following, we derive an expression for the NTC,~$C$, and show that~$\forall~a\triangleq\|A\|_2<C$, there are a set of choices for the~$W=\{w_{ij}\}$ and~$B=\mbox{blockdiag}\{B_i\}$ such that the networked estimator in~\eqref{est1} results into bounded estimation error. In Section~\ref{sge}, we consider a special case ($w_{ii} = 1-\alpha,w_{ij,j\neq i}=\alpha, B_i=\alpha I_n~$) and design the parameter~$\alpha$, which leads to local design of the networked estimator studied in Section~\ref{ld}.

\subsection{Network tracking capacity}
In the following, we derive a mathematical expression for the NTC and explore some of its properties. To this end, we note the following:
\begin{eqnarray}\nonumber
\|P\|_2 &=& \|(I_N\otimes A)(W \otimes I_n - BD_H)\|_2,\\\nonumber
&\leq& \|I_N\otimes A\|_2\|W \otimes I_n - BD_H)\|_2,\\\nonumber
&=& a\|W \otimes I_n - BD_H\|_2.
\end{eqnarray}
Note that all such stochastic matrices~$W\in\mathbb{R}^{N\times N}$ and gain matrices~$B_i\in\mathbb{R}^{n\times n}$, which guarantee~$\|W \otimes I_n - BD_H\|_2<1/a$, ensure a stable error process, i.e.,~$\|P\|_2<1$. Clearly, feasible~$W,B$ may not exist for all dynamical systems (i.e., for arbitrarily large~$a$). Hence, it is natural to ask what is the range of~$a$ for which we can guarantee~$\|W \otimes I_n - BD_H\|_2<1/a$ with appropriate choices of~$W$ and~$B$. This exposition naturally lends itself to the Definition~\ref{ntc_def} of Network Tracking Capacity (NTC) we introduced earlier. The following theorem mathematically characterizes the NTC,~$C$.

\begin{theorem}
The network tracking capacity,~$C$, is given by\footnote{Note the parallels between this definition and the structured singular value (cf.~\cite{paganini_book} and the references therein). The NTC is essentially quantifying the margin of stability of the estimator dynamics.}
\begin{eqnarray}\label{ntc_ex}
C = \dfrac{1}{\min_{W,B}\|W \otimes I_n - BD_H\|_2}.
\end{eqnarray}
\end{theorem}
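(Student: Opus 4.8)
The plan is to establish the identity in \eqref{ntc_ex} by a two-sided argument, showing that $C$ as defined in Definition~\ref{ntc_def} (the supremum of $a = \|A\|_2$ over systems estimable with bounded MSE) equals the reciprocal of $\mu \triangleq \min_{W,B}\|W\otimes I_n - BD_H\|_2$, where the minimum is over stochastic $W\sim\mathbb{G}$ and block-diagonal $B$. First I would argue the \emph{achievability} direction: $C \geq 1/\mu$. Fix any $a < 1/\mu$ and let $(W^\star,B^\star)$ attain $\mu$. Then $\|W^\star\otimes I_n - B^\star D_H\|_2 = \mu < 1/a$, so by the submultiplicative bound already derived in the excerpt, $\|P\|_2 \leq a\|W^\star\otimes I_n - B^\star D_H\|_2 = a\mu < 1$, hence $\rho(P)<1$ and the error recursion \eqref{err_eq3} is stable. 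It then remains to note that a stable linear recursion driven by the zero-mean, bounded-covariance noise $\mb{u}_k$ (whose covariance is finite because $V$, $R$ and the fixed matrices $A$, $B^\star$, $H_j$ are all bounded) has bounded steady-state error covariance — this is the standard discrete Lyapunov argument, $\Sigma_\infty = \sum_{t\geq 0} P^t \mathbb{E}[\mb{u}\mb{u}^T] (P^T)^t$, which converges in norm since $\rho(P)<1$. So every $a<1/\mu$ is estimable, giving $C \geq 1/\mu$.

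Next I would argue the \emph{converse}, $C \leq 1/\mu$, i.e.\ that no system with $a > 1/\mu$ can be estimated with bounded MSE by \eqref{est1} for \emph{any} admissible $W,B$. The cleanest route is through the singular-value lower bound. Since $A\in\mathbb{R}^{n\times n}$ with $\|A\|_2 = a$, we have $\sigma_{\max}(I_N\otimes A) = a$ and, using $\sigma_{\min}(XY) \geq \sigma_{\min}(X)\,\sigma_{\min}(Y)$ together with the structure $P = (I_N\otimes A)(W\otimes I_n - BD_H)$, one gets a lower bound on $\|P\|_2$ of the form $\|P\|_2 \geq \sigma_{\min}(I_N\otimes A)\,\|W\otimes I_n - BD_H\|_2$ when $A$ is nonsingular; more to the point, by choosing the "worst" $A$ (e.g.\ $A = a\,UV^T$ aligned with the right singular vector of $W\otimes I_n - BD_H$ achieving the norm, so that the top singular directions compose rather than cancel), one can force $\|P\|_2 \geq a\,\|W\otimes I_n - BD_H\|_2 \geq a\mu$. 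For $a > 1/\mu = 1/\min_{W,B}\|\cdot\|_2$ this is $>1$ for every admissible choice, so $\rho(P) \le \|P\|_2$ does not suffice — but in fact one shows $\rho(P)\ge 1$ for such $A$ directly from the singular-vector alignment, forcing the error covariance to be unbounded. Hence $a>1/\mu$ is not estimable, so $C \leq 1/\mu$. Combining the two inequalities yields \eqref{ntc_ex}.

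A few points need care. The minimum in the denominator should be shown to be \emph{attained}: the feasible set of stochastic $W\sim\mathbb{G}$ is a compact convex polytope, the feasible set of block-diagonal $B$ is a closed (but not compact) subspace, and $\|W\otimes I_n - BD_H\|_2$ is a continuous coercive-in-$B$ convex function, so the infimum is achieved — I would state this as a short remark so that writing $\min$ rather than $\inf$ is justified. One also needs the blocks $\sum_{j\in\mathcal{N}_i}H_j^TH_j$ in $D_H$ to be generic enough that the problem is nondegenerate; no invertibility of $D_H$ is required for the theorem as stated, only for the scalar-gain specialization later. Finally, the boundary case $a = 1/\mu$ is genuinely delicate ($\|P\|_2 = 1$ is possible with $\rho(P)<1$ or $=1$ depending on non-normality), so I would phrase the statement, consistent with Definition~\ref{ntc_def}, as: for every $a<C$ bounded MSE is achievable, and for every $a>C$ it is not — i.e.\ $C$ is the threshold $1/\mu$.

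The main obstacle I expect is the converse direction. The submultiplicative inequality only upper-bounds $\|P\|_2$, so to show that \emph{large} $a$ is genuinely unestimable one cannot simply invoke "$\|P\|_2 \ge a\mu > 1$" — a large two-norm does not by itself imply instability. The real content is that for an \emph{adversarially chosen} $A$ (one whose singular structure lines up with the optimal $W\otimes I_n - BD_H$), the spectral radius itself, not just the norm, exceeds $1$; making that alignment argument uniform over all admissible $W,B$ — essentially showing the min over $(W,B)$ of the achievable spectral radius against the worst compatible $A$ of norm $a$ is exactly $a\mu$ — is the crux, and is where the connection to the structured singular value flagged in the footnote does the work.
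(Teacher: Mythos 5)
Your achievability half is exactly the paper's entire proof: the authors assume $a<C$, pick $(W,B)$ as the argmin of $\|W\otimes I_n - BD_H\|_2$ over the structured feasible set, and invoke the submultiplicative bound $\|P\|_2\le a\|W\otimes I_n-BD_H\|_2<1$. They do not address attainment of the minimum, the finiteness of the driven covariance (that is deferred to Section~\ref{perf}), or --- importantly --- any converse. So the part of your proposal that matches the paper is correct and complete relative to what the paper actually proves.

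The converse you attempt is where the genuine gap lies, and it is worth being precise about why. Bounded MSE for the recursion~\eqref{err_eq3} requires only $\rho(P)<1$, and the two-norm is merely an upper bound on $\rho$. To rule out estimability for $a>1/\mu$ you would need: for some (indeed every) $A$ with $\|A\|_2=a$, \emph{every} admissible $(W,B)$ yields $\rho\bigl((I_N\otimes A)(W\otimes I_n-BD_H)\bigr)\ge 1$. Your alignment construction does not deliver this. First, forcing $\|P\|_2\ge a\mu$ says nothing about $\rho(P)$. Second, the proposed alignment $A u_i = a v_i$ for all blocks $i$ of the top singular vectors imposes $N$ constraints on a single $n\times n$ matrix of norm $a$ (in particular it forces $\|u_i\|=\|v_i\|$ for every $i$), which is generically infeasible; and even if feasible for the optimal $(W,B)$, the adversarial $A$ must defeat all $(W,B)$ simultaneously, not just the norm-minimizer. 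In fact the exact converse is false under the literal Definition~\ref{ntc_def}: there are instances with $\|W\otimes I_n-BD_H\|_2>1/a$ but $\rho(P)<1$ (this is precisely the conservatism of the two-norm relaxation the paper concedes when motivating the LMI design of Section~\ref{srd}). The theorem should therefore be read, as the paper's proof implicitly reads it, as characterizing a guaranteed (sufficient) tracking capacity under the two-norm criterion rather than a tight threshold; your instinct to treat the boundary and the converse as the delicate part is right, but the fix is to weaken the claim, not to complete the alignment argument.
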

\begin{proof}
To prove the above theorem, we have to show that for any~$a<C$, there exist a stochastic matrix~$W$ and gain matrices~$B_i$'s such that we have a stable error process, i.e.,~$\|P\|_2<1$. We assume~$a<C$ and choose
\begin{eqnarray}\nonumber
&&[W,B] = \mbox{argmin}_{W^\prime,B^\prime}\|W^\prime \otimes I_n - B^\prime D_H\|_2,\\\label{alpha_opt}
&&\mbox{subject to}\qquad W^\prime \mbox{~is stochastic},
\end{eqnarray}
then we have
\begin{eqnarray}\nonumber
\|P\|_2 &\leq& a\|W \otimes I_n - BD_H\|_2,\\\nonumber
&<& \dfrac{1~~(\min_{W,B}\|W \otimes I_n - BD_H\|_2)}{\min_{W,B}\|W \otimes I_n - BD_H\|_2},\\\nonumber
&<& 1.
\end{eqnarray}
\end{proof}
The above theorem shows that for all~$a<C$, there exist~$W, B$ as given in~\eqref{alpha_opt} that ensure a stable error process\footnote{This capacity argument is similar to the information-theoretic capacity argument, where the rate of information has to be less than the channel capacity for reliable communication. Achieving the capacity in either case requires careful design of the underlying parameters, i.e.,~$W,B$ in our case, under the appropriate constraints.}, i.e.,~$\|P\|_2<1$. Some properties of the NTC are explored below.
\begin{lem}\label{lem_ntc_ran}
The NTC lies in the interval~$[1,\infty]$,~i.e., $1\leq C\leq \infty.$
\end{lem}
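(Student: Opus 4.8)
The plan is to bound the quantity $\mu \triangleq \min_{W,B}\|W\otimes I_n - BD_H\|_2$ from both sides: showing $\mu \leq 1$ gives $C \geq 1$, and showing $\mu \geq 0$ (trivially) together with the possibility $\mu = 0$ gives $C \leq \infty$. For the lower bound $C\geq 1$, I would exhibit a single feasible pair $(W,B)$ that makes the norm at most one. The natural candidate is $W = I_N$ (which is stochastic) and $B = 0$ (which is block-diagonal), giving $W\otimes I_n - BD_H = I_{nN}$, hence $\|I_{nN}\|_2 = 1$. Since the minimum is over a set containing this point, $\mu \leq 1$, and therefore $C = 1/\mu \geq 1$. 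One should remark that the minimization is over a nonempty closed set (stochastic $W$ with the graph sparsity, block-diagonal $B$) and the objective is a continuous, coercive-on-$B$ convex function, so the minimum is attained and $\mu$ is well-defined; if $\mu = 0$ we adopt the convention $C = +\infty$, which is consistent with the paper's earlier remark that fully-connected networks or infinite information exchange yield infinite NTC.

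For the upper bound $C \leq \infty$, the statement is essentially a convention: $\mu \geq 0$ always, since $\|\cdot\|_2 \geq 0$, so $C = 1/\mu \in (0,\infty]$, and combined with $C\geq 1$ we get $C\in[1,\infty]$. If one wishes to say something sharper, namely when $C$ is actually finite, I would note that collective observability of $(A,H)$ forces the $H_i$'s jointly to have full column rank, so $D_H$ is a nonzero positive semidefinite block-diagonal matrix; whether $\mu > 0$ then depends on whether the range of the map $(W,B)\mapsto W\otimes I_n - BD_H$, restricted to the structural constraints, can reach zero, which in turn depends on the sparsity of $\mathbb{G}$ and the block structure of $D_H$. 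For a sparsely connected graph one expects $\mu > 0$ generically, giving finite $C$; for the complete graph $D_H = I_N\otimes(\sum_j H_j^TH_j)$ after appropriate design, and one can drive the norm to zero, giving $C = \infty$.

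The main (and only real) obstacle here is essentially bookkeeping rather than mathematics: one must be careful that the constraint set in the definition of $\mu$ is exactly the one over which $C$ was defined — stochastic $W$ respecting $\mathbb{G}$ and block-diagonal $B$ — and that the feasible point $(I_N, 0)$ indeed lies in it, which it does since $I_N$ is stochastic and respects any graph containing all self-loops (which $\mathcal{N}_i$ does by definition, as $i\in\mathcal{N}_i$), and $0$ is trivially block-diagonal. Everything else is a one-line norm computation plus the observation that a minimum over a larger set is no larger than the value at any particular point.
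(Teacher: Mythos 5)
Your proposal is correct and takes essentially the same route as the paper: the lower bound $C\geq 1$ is obtained from the identical feasible point $W=I_N$, $B=0_{nN}$ giving $\|I_{nN}\|_2=1$, and the upper endpoint $\infty$ is handled by noting it can be attained when $BD_H=W\otimes I_n$ is achievable (e.g., one-step neighborhood observability), exactly as in the paper's proof.
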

\begin{proof}
Choosing~$W=I_N$ and~$B=0_{nN}$ provides the lower bound. To get the upper bound consider a communication graph and an observation model that give~$BD_H=W\otimes I_n$. For example, an observation model where each neighborhood is \emph{one-step} observable (i.e.,~$(\sum_{j\in\mathcal{N}_i}H_j^TH_j)^{-1}$ exists~$\forall~i$) suffices by choosing~$W=I$ and~$B_i = (\sum_{j\in\mathcal{N}_i}H_j^TH_j)^{-1}$.
\end{proof}
The above lemma shows that the NTC for a system where each neighborhood is one-step observable results into an infinite tracking capacity, i.e., any dynamical system can be tracked with bounded error. Similar argument shows that the capacity for a one-step observable system with a fully-connected agent network is also infinite (as~$\sum_{j\in\mathcal{V}}H_j^TH_j$ is always invertible). In the case where we have no observations, i.e.,~$D_H=0_{nN}$, it can be verified that~$C=1$, i.e., only stable dynamical systems ($a<1$) can be tracked with bounded error.

\section{Performance}\label{perf}
In this section, we study the performance of the single time-scale estimator~\eqref{est1}. For this purpose, our main concern is the noise process,~$\mb{u}_k$ (linear combination of the system and the observation noise), in the error process,~$\mb{e}_k$, see~\eqref{err_eq3}. Note that
\begin{eqnarray}\label{eta_mean}
\mathbb{E}[\mb{u}_k] = \mb{0},\qquad\mathbb{E}\left[\mb{u}_k\mb{u}_j^T\right] = \mb{0},\qquad k\neq j,
\end{eqnarray}
since both~$\mb{v}_k$ and~$\mb{\phi}_k$ are zero-mean and statistically independent over time. Define~$\Sigma$ to be the noise covariance matrix of the error process, i.e.,
\begin{eqnarray}\nonumber
\Sigma&\triangleq& \mathbb{E}\left[\mb{u}_k\mb{u}_k^T\right]= \Phi + \mb{1}_N\mb{1}_N^T\otimes V,
\end{eqnarray}
where
\begin{eqnarray}\nonumber
\Phi&\triangleq&\mathbb{E}[\mb{\phi}_k\mb{\phi}_k^T]= (I_N\otimes A) B (\mc{A}\otimes I_n)\overline{R}(\mc{A}\otimes I_n)^TB^T(I_N\otimes A)^T,
\end{eqnarray}
with
\begin{eqnarray}
\overline{R} &=& \left[
\begin{array}{ccc}
H_1^TR_1H_1&&\\
&\ddots&\\
&&H_N^TR_NH_N
\end{array}
\right],\\
\mc{A} &=& \mbox{Adj}(\mb{G}) + I_N.
\end{eqnarray}
Using standard stability arguments for linear systems~\cite{paganini_book}, it can be shown that the error process in~\eqref{err_eq3} is stable, and asymptotically unbiased,~i.e., $\lim_{k\rightarrow\infty} \mathbb{E}[\mb{e}_{k+1}] = \mb{0}$, when we have
\begin{eqnarray}\label{Pnorm1}
p\triangleq\|P\|_2 &<& 1.
\end{eqnarray}
Furthermore, it can be verified that
\begin{eqnarray}\nonumber
S_{k+1}&\triangleq&\mathbb{E}\left(\mb{e}_{k+1}\mb{e}_{k+1}^T\right)= P^{k+1}S_0(P^{T})^{k+1} + \sum_{j=0}^k P^j \Sigma (P^T)^j,
\end{eqnarray}
and, asymptotically, assuming~$\|P\|_2<1$, we have
\begin{eqnarray}\nonumber
S_\infty&\triangleq&\lim_{k\rightarrow\infty}S_{k+1}= \sum_{j=0}^\infty P^j \Sigma (P^T)^j.
\end{eqnarray}
The two-norm of the steady-state error covariance can now be bounded above as
\begin{eqnarray}\nonumber
\|S_\infty\|_2 &\leq& \sum_{j=0}^\infty \|P^j \Sigma (P^T)^j\|_2,\\
&\leq& \dfrac{\|\Sigma\|_2}{1-p^2}.
\end{eqnarray}
The above upper bound can be further expanded as
\begin{eqnarray}\nonumber
\|S_\infty\|_2 &\leq& \dfrac{\|\mb{1}_N\mb{1}_N^T\otimes V + (I_N\otimes A) B (\mc{A}\otimes I_n)\overline{R}(\mc{A}\otimes I_n)^TB^T(I_N\otimes A)^T\|_2}{1 - \|(I_N\otimes A)(W \otimes I_n - BD_H)\|_2^2},\\
&\leq& \dfrac{N\|V\|_2 + \|(I_N\otimes A) B (\mc{A}\otimes I_n)\overline{R}(\mc{A}\otimes I_n)^TB^T(I_N\otimes A)^T\|_2}{1 - a^2\|W \otimes I_n - BD_H\|_2^2},
\end{eqnarray}
where we have employed the triangle inequality and the fact that
\begin{eqnarray}\nonumber
\|\mb{1}_N\mb{1}_N^T\otimes V\|_2 = \|\mb{1}_N\mb{1}_N^T\|_2\|V\|_2 = N\|V\|_2.
\end{eqnarray}
Similarly, using sub-multiplicative property of the two-norm and
\begin{eqnarray}\nonumber
\|\mc{A}\otimes I_n\|_2 = \|\mc{A}\|_2 \leq \|\mb{1}_N\mb{1}_N^T\|_2 = N,
\end{eqnarray}
we can further simplify the upper bound on~$\|S_\infty\|$ as
\begin{eqnarray}
\dfrac{1}{N}\|S_\infty\|_2 &\leq& \dfrac{\|V\|_2 + a^2N \|B\|_2^2 \|\overline{R}\|_2}{1 - a^2\|W \otimes I_n - BD_H\|_2^2},
\end{eqnarray}
where we scale the networked steady-state error by~$1/N$ to write the steady error at each agent. Clearly, when we choose~$W$ and~$B$ such that~$a<1/\|W\otimes I_n-BD_H\|$, we have~$a\|W\otimes I_n-BD_H\|<1$ and the denominator is never~$0$. In other words, when~$a<C$, the steady-state error is bounded. In addition, the farther we operate from the capacity, the lower the steady-state error bound.

Notice that the upper bound on the steady state error is small when both~$\|B\|_2$ and~$\|W \otimes I_n - BD_H\|_2$ are small. Hence, we may implement the following convex optimization.
\begin{eqnarray}\label{co_perf}
\min_{W,B} \|B\|_2 &+& \|W \otimes I_n - BD_H\|_2,\\\nonumber
\mbox{subject to}&&\|W \otimes I_n - BD_H\|_2 < \dfrac{1}{a},\\\nonumber
&&W\mbox{ is stochastic} \mbox{ and } W\sim \mathbb{G},\\\nonumber
&&B\mbox{ is block-diagonal}.
\end{eqnarray}
Clearly, when the above constraints are satisfied, a stable estimator exists under the structure on the weight matrix,~$W$, and the gain matrix,~$B$. Minimizing the objective under stability and structural constraints further ensures a performance limit on the steady state error.

\section{Scalar gain estimators}\label{sge}
Consider a special case of the estimator in~\eqref{est1} by choosing~$W=I_N - \alpha L$ and~$B_i = \alpha I_n,\forall~i$, for some~$\alpha\in\mathbb{R}_{\geq 0}$. The resulting estimator at agent~$i$ and time~$k$ is given by
\begin{eqnarray}\label{est2}
\widehat{\mb{x}}^i_{k+1} = A\widehat{\mb{x}}^i_k - \alpha A\sum_{j\in\mathcal{N}_i}\left(\widehat{\mb{x}}^i_k - \widehat{\mb{x}}^j_k  - H_j^T\left(\mb{y}_k^j-H_j\widehat{\mb{x}}^i_k\right)\right).
\end{eqnarray}
We term this estimator as the \emph{scalar gain estimator}. We denote the NTC for this estimator by~$C_\alpha$. Clearly, we have
\begin{eqnarray}\label{CalC}
C_\alpha\leq C.
\end{eqnarray}
It can be verified that the matrix~$P$ in~\eqref{err_eq3} for scalar gain estimators is given by
\begin{eqnarray}
P&\triangleq&(I_N\otimes A)(I_{nN} - \alpha Q),
\end{eqnarray}
where
\begin{eqnarray}
Q = L\otimes I_N + D_H.
\end{eqnarray}
The rest of this section is dedicated to the study of scalar gain estimators. To establish our results, we provide the following lemma.
\begin{lem}\label{lem_ntc_prop}
We have
\begin{eqnarray}\nonumber
\min_{\alpha}\|I_{nN} - \alpha Q\|_2 &=& \dfrac{\lambda_{nN}(Q)-\lambda_{1}(Q)}{\lambda_{nN}(Q)+\lambda_{1}(Q)},\\\label{alpha_opt2}
\alpha_{\mbox{\scriptsize opt}}\triangleq\mbox{argmin}_{\alpha}\|I_{nN} - \alpha Q\|_2 &=& \dfrac{2}{\lambda_{nN}(Q)+\lambda_{1}(Q)}.
\end{eqnarray}
\end{lem}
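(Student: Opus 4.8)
The plan is to exploit the fact that $Q = L\otimes I_N + D_H$ is a symmetric positive semi-definite matrix (a sum of two symmetric PSD matrices: $L$ is a graph Laplacian and $D_H$ is block-diagonal with each block $\sum_{j\in\mathcal N_i}H_j^TH_j\succeq 0$), so that $I_{nN}-\alpha Q$ is symmetric for every real $\alpha$, and hence its induced two-norm equals its spectral radius. First I would diagonalize $Q$ as $Q = U\Lambda U^T$ with $U$ orthogonal and $\Lambda = \mathrm{diag}(\lambda_1(Q),\dots,\lambda_{nN}(Q))$, so that $\|I_{nN}-\alpha Q\|_2 = \max_{i}\,|1-\alpha\lambda_i(Q)|$. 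Since $1-\alpha\lambda$ is affine (monotone) in $\lambda$ for fixed $\alpha$, the maximum over $i$ is attained at one of the two extreme eigenvalues, giving
\begin{eqnarray}\nonumber
\|I_{nN}-\alpha Q\|_2 = \max\bigl\{\,|1-\alpha\lambda_1(Q)|,\ |1-\alpha\lambda_{nN}(Q)|\,\bigr\}.
\end{eqnarray}

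Next I would minimize $f(\alpha)\triangleq\max\{|1-\alpha\lambda_1|,|1-\alpha\lambda_{nN}|\}$ over $\alpha\in\mathbb R$ (the statement restricts $\alpha\ge 0$, but the optimum will turn out to be positive as long as $\lambda_{nN}>0$). This is the standard Chebyshev-type argument: each of the two terms is a V-shaped convex function of $\alpha$, so $f$ is convex and piecewise linear, and its minimum occurs where the two active pieces cross, i.e. where $1-\alpha\lambda_1 = -(1-\alpha\lambda_{nN})$ (the relevant branch, since for small $\alpha>0$ the first term exceeds the second when $\lambda_1<\lambda_{nN}$). Solving $1-\alpha\lambda_1 = \alpha\lambda_{nN}-1$ yields $\alpha_{\mathrm{opt}} = 2/(\lambda_{nN}(Q)+\lambda_1(Q))$, and substituting back gives the optimal value $(\lambda_{nN}(Q)-\lambda_1(Q))/(\lambda_{nN}(Q)+\lambda_1(Q))$, as claimed. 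I would also briefly check the degenerate cases: if $\lambda_1(Q)=\lambda_{nN}(Q)$ the expression is $0$ (attained by $\alpha=1/\lambda_1$), and if $Q$ is singular ($\lambda_1(Q)=0$, e.g. no observations) the value is $1$ with $\alpha_{\mathrm{opt}}=2/\lambda_{nN}(Q)$, consistent with the earlier remark that $C=1$ when $D_H=0$.

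The only genuine subtlety — and the step I would be most careful about — is justifying that the $\max$ over all $nN$ eigenvalues collapses to just the two endpoints $\lambda_1$ and $\lambda_{nN}$, and then correctly identifying which of the two absolute-value branches is active near the optimum so that the crossing equation is set up with the right signs. Everything else is the textbook min-max line fitting computation. I would present the endpoint reduction as an immediate consequence of monotonicity of $\lambda\mapsto|1-\alpha\lambda|$'s envelope (for fixed $\alpha$, $|1-\alpha\lambda|$ is itself convex in $\lambda$, hence maximized over the interval $[\lambda_1,\lambda_{nN}]$ at an endpoint), and the branch identification from the observation that at $\alpha_{\mathrm{opt}}$ we have $\alpha_{\mathrm{opt}}\lambda_1<1<\alpha_{\mathrm{opt}}\lambda_{nN}$, so $|1-\alpha_{\mathrm{opt}}\lambda_1| = 1-\alpha_{\mathrm{opt}}\lambda_1$ and $|1-\alpha_{\mathrm{opt}}\lambda_{nN}| = \alpha_{\mathrm{opt}}\lambda_{nN}-1$, and these two are equal — confirming optimality by convexity of $f$.
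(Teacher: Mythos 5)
Your proposal is correct and follows essentially the same route as the paper's proof: since $Q$ is symmetric PSD, the two-norm of $I_{nN}-\alpha Q$ reduces to $\max\{1-\alpha\lambda_1(Q),\ \alpha\lambda_{nN}(Q)-1\}$, and the minimum is found at the crossing of the two linear pieces, yielding $\alpha_{\mathrm{opt}}=2/(\lambda_{nN}(Q)+\lambda_1(Q))$. Your extra care about the endpoint reduction and the sign of the active branches is a slight tightening of steps the paper states without justification, but the argument is the same.
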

\begin{proof}
Since~$Q$ is symmetric positive semi-definite, its eigenvalues are positive reals. Since~$\|I_{nN} - \alpha Q\|_2$ is also symmetric, we have
\begin{eqnarray}\nonumber
\|I-\alpha Q\|_2 = \max_{1\leq i\leq nN}|\lambda_i(I-\alpha Q)|.
\end{eqnarray}
The eigenvalues of~$I-\alpha Q$ are~$1-\alpha\lambda_i(Q)$. Hence,
\begin{eqnarray}\nonumber
\|I-\alpha Q\|_2 = \max\{1 - \alpha\lambda_{1}(Q),\alpha\lambda_{nN}(Q) - 1\}.
\end{eqnarray}
As a function of~$\alpha$, both~$1 - \alpha\lambda_{1}(Q)$ and~$\alpha\lambda_{nN}(Q) - 1$ are straight lines with slopes~$-\lambda_1(Q)$ and~$\lambda_{nN}(Q)$, respectively. Hence, the lines~$1 - \alpha\lambda_{1}(Q)$ and~$\alpha\lambda_{nN}(Q) - 1$ intersect at~$\alpha_{\mbox{\scriptsize int}}$ (having slopes opposite in sign). The point of intersection is given by
\begin{eqnarray}\nonumber
1 - \alpha_{\mbox{\scriptsize int}}\lambda_{1}(Q)&=&\alpha_{\mbox{\scriptsize int}}\lambda_{nN}(Q) - 1,\\
\Rightarrow \alpha_{\mbox{\scriptsize int}} &=& \dfrac{2}{\lambda_{nN}(Q)+\lambda_{1}(Q)}.
\end{eqnarray}
It can be verified that the considered minimization lies at~$\alpha_{\mbox{\scriptsize int}}$, i.e., $\alpha_{\mbox{\scriptsize opt}} = \alpha_{\mbox{\scriptsize int}},$ and is thus
\begin{eqnarray}\nonumber
\min_\alpha\|I-\alpha Q\|_2 &=& \min_\alpha\max\{1 - \alpha\lambda_{1}(Q),\alpha\lambda_{nN}(Q) - 1\},\\
&=& 1 - \alpha_{\mbox{\scriptsize opt}}\lambda_{1}(Q),
\end{eqnarray}
and the lemma follows.
\end{proof}
From the above lemma, we note that an equivalent expression for the NTC,~$C_\alpha$, is
\begin{eqnarray}\label{ntc_exp2}
C_\alpha = \dfrac{\lambda_{nN}(Q)+\lambda_{1}(Q)}{\lambda_{nN}(Q)-\lambda_{1}(Q)}.
\end{eqnarray}

The following lemma establishes the NTC for connected (agent network) systems with one-step collective observability\footnote{Notice that for scalar-gain estimators, since we have a single parameter,~$\alpha$, to design both network weights ($W$) and estimator gain ($B$), we require stronger (one-step) observability. Recall that the standard~$(A,H)$-observability is~$n$-step. However, as we motivated before, no strict subset agents is assumed to be one-step observable.}, i.e., the following matrix
\begin{eqnarray}
G\triangleq \sum_{j\in\mathcal{V}}H_j^T H_j
\end{eqnarray}
is invertible.

\begin{lem}\label{lem_sc:nss}
Let~$\mathbb{G}$ be connected, i.e.,
\begin{eqnarray}\nonumber
0=\lambda_1(L)<\lambda_2(L)\leq\ldots\leq\lambda_N(L),
\end{eqnarray}
and let the observation models be collectively-observable in \emph{one time-step}. Then~$C_\alpha>1$.
\end{lem}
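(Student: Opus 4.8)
The plan is to use the closed form $C_\alpha=\bigl(\lambda_{nN}(Q)+\lambda_1(Q)\bigr)/\bigl(\lambda_{nN}(Q)-\lambda_1(Q)\bigr)$ recorded in \eqref{ntc_exp2}, with $Q=L\otimes I_n+D_H$, and reduce the entire statement to the claim that $Q$ is \emph{strictly} positive definite, i.e. $\lambda_1(Q)>0$. Granting this: $Q$ is symmetric, so all $\lambda_i(Q)$ are real and nonnegative with $\lambda_1(Q)\le\lambda_{nN}(Q)$, and the map $t\mapsto(M+t)/(M-t)$ with $M=\lambda_{nN}(Q)>0$ is strictly increasing on $[0,M)$ and equals $1$ at $t=0$; hence $\lambda_1(Q)>0$ forces $C_\alpha>1$ whenever $\lambda_1(Q)<\lambda_{nN}(Q)$, while in the degenerate case $\lambda_1(Q)=\lambda_{nN}(Q)$ we read off $C_\alpha=\infty>1$ directly from \eqref{ntc_exp2}. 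So the only real work is showing $Q\succ 0$.

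To prove $Q\succ 0$ I would argue on the quadratic form. Writing $z=[z_1^T,\dots,z_N^T]^T$ with $z_i\in\mathbb{R}^n$,
\[
z^T Q z \;=\; z^T(L\otimes I_n)z + z^T D_H z \;=\; \sum_{(i,j)\in\mathcal{E}}\|z_i-z_j\|_2^2 \;+\; \sum_{i=1}^N z_i^T\Bigl(\sum_{j\in\mathcal{N}_i}H_j^TH_j\Bigr)z_i ,
\]
and both sums are nonnegative (the second because each $\sum_{j\in\mathcal{N}_i}H_j^TH_j\succeq 0$), so $z^TQz=0$ forces both to vanish. Vanishing of the first sum together with connectivity of $\mathbb{G}$ forces $z_1=\cdots=z_N=:\bar z$. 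Substituting $z_i=\bar z$ into the second sum and interchanging the order of summation gives $\sum_{i=1}^N\sum_{j\in\mathcal{N}_i}H_j^TH_j=\sum_{j=1}^N|\mathcal{N}_j|\,H_j^TH_j$, where I have used that $\mathbb{G}$ is undirected so $j\in\mathcal{N}_i\iff i\in\mathcal{N}_j$; since every $|\mathcal{N}_j|\ge 1$, this matrix dominates $G=\sum_{j=1}^N H_j^TH_j$ in the positive-semidefinite order. Hence $0=\bar z^T\bigl(\sum_j|\mathcal{N}_j|H_j^TH_j\bigr)\bar z\ge\bar z^T G\bar z$, and one-step collective observability ($G\succ 0$) forces $\bar z=0$, so $z=0$. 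Therefore $Q\succ 0$, $\lambda_1(Q)>0$, and $C_\alpha>1$.

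I expect the positive-definiteness step to be the only genuine obstacle, and within it the two delicate points are: (a) the \emph{two-stage collapse} — the Laplacian term must be exploited first to force agreement $z_i\equiv\bar z$, and only afterwards does one-step observability act, and only on the single common vector $\bar z$; it would \emph{not} suffice to have each block $\sum_{j\in\mathcal{N}_i}H_j^TH_j$ nonsingular, which is precisely the ``no strict subset is one-step observable'' regime the paper allows; and (b) the re-indexing $\sum_{i}\sum_{j\in\mathcal{N}_i}H_j^TH_j=\sum_{j}|\mathcal{N}_j|H_j^TH_j$, which hinges on $\mathbb{G}$ being undirected. The remaining pieces — reading $C_\alpha$ off Lemma~\ref{lem_ntc_prop} via \eqref{ntc_exp2}, the monotonicity argument, and the handling of the degenerate case $\lambda_{nN}(Q)=\lambda_1(Q)$ — are routine.
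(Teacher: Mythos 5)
Your proof is correct and follows essentially the same route as the paper: both reduce the claim to showing $Q=L\otimes I_n+D_H\succ 0$, using connectivity to confine the kernel of the Laplacian term to vectors of the form $\mathbf{1}_N\otimes\mathbf{a}$ and then invoking $\sum_i\sum_{j\in\mathcal{N}_i}H_j^TH_j\succeq G\succ 0$ on that subspace, before reading off $C_\alpha>1$ from \eqref{ntc_exp2}. Your quadratic-form ``two-stage collapse'' is a slightly cleaner rendering of the paper's eigenvector case split (and, unlike the paper, you explicitly handle the degenerate case $\lambda_1(Q)=\lambda_{nN}(Q)$), but the underlying argument is the same.
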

\begin{proof}
To prove the above lemma, we first show that, for a connected-observable system, the matrix~$Q$ is strictly positive-definite. We note that, since the graph is connected, the only eigenvector of the graph Laplacian corresponding to the~$0$ eigenvalue is~$\mb{1}_N$~\cite{bela_book}. Hence, the eigenvector of~$L\otimes I_n$ corresponding to the~$0$ eigenvalue is~$\mb{1}_N\otimes \mathbf{a}$, for any~$\mb{a}\in\mathbb{R}^n$. For this eigenvector
\begin{eqnarray}\nonumber
(\mb{1}_N\otimes \mathbf{a})^TD_H(\mb{1}_N\otimes \mathbf{a}) &=& \sum_{i=1}^N\mathbf{a}^T\sum_{j\in\mathcal{N}_i}H_j^T H_j\mb{a},\\\nonumber &\geq& \mathbf{a}^T G \mb{a} > 0,
\end{eqnarray}
since~$G$ is invertible and strictly positive definite. Clearly, for any vector,~$\mb{b}\neq\mb{1}_N\otimes \mathbf{a}$, for~$\mb{a}\in\mathbb{R}^n$, we have both~$\mb{b}^T(L\otimes I_n) \mb{b}>0$ and~$\mb{b}^TD_H\mb{b}>0$. Hence,
\begin{eqnarray}\nonumber
\mathbf{b^\prime}^TQ\mathbf{b^\prime} > 0, \qquad \forall~\mb{b^\prime}\in\mathbb{R}^{nN}.
\end{eqnarray}
Thus, the eigenvalues of~$Q$ (for connected-observable systems) are strictly positive, i.e.,~$0<\lambda_1(Q)$, and
\begin{eqnarray}\nonumber
\dfrac{\lambda_{nN}(Q)}{\lambda_1(Q)} \geq 1 ~\Rightarrow~\dfrac{\lambda_{nN}(Q)}{\lambda_1(Q)} + 1 > \dfrac{\lambda_{nN}(Q)}{\lambda_1(Q)} - 1,
\end{eqnarray}
and the lemma follows.
\end{proof}
The above lemma provides a fundamental result that for any connected and (one-step) observable system, there exist unstable ($a>1$) dynamics that can be tracked by the estimator in~\eqref{est2} with bounded MSE\footnote{Note that, due to~\eqref{CalC}, the above lemma applies to the general class of estimators in~\eqref{est1}, i.e,~$C>1$ for connected (agent network) systems that are collectively observable in one-time step.}.

\subsection{Range of~$\alpha$}
In the following lemma, we show a range of~$\alpha$ that ensures a stable error process, i.e.,~$\|P\|_2<1,\forall~a<C_\alpha$.
\begin{lem}\label{lem_range}
Let~$a<C_\alpha$, then~$\|P\|_2<1$ for
\begin{eqnarray}\label{alp_int}
\alpha \in \left(\dfrac{a-1}{a\lambda_1(Q)},\dfrac{a+1}{a\lambda_{nN}(Q)}\right)\triangleq (\alpha_0,\alpha_1).
\end{eqnarray}
\end{lem}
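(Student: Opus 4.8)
The plan is to bound $\|P\|_2 = \|(I_N\otimes A)(I_{nN}-\alpha Q)\|_2 \le a\,\|I_{nN}-\alpha Q\|_2$ and then to find exactly those $\alpha$ for which $a\,\|I_{nN}-\alpha Q\|_2 < 1$, i.e., $\|I_{nN}-\alpha Q\|_2 < 1/a$. Since $Q$ is symmetric positive semi-definite (and strictly positive definite in the connected, one-step observable case treated in Lemma~\ref{lem_sc:nss}), the argument of Lemma~\ref{lem_ntc_prop} applies verbatim: the eigenvalues of $I_{nN}-\alpha Q$ are $1-\alpha\lambda_i(Q)$, so for $\alpha \ge 0$ we have $\|I_{nN}-\alpha Q\|_2 = \max\{\,1-\alpha\lambda_1(Q),\ \alpha\lambda_{nN}(Q)-1\,\}$. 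Thus the single scalar constraint $\|I_{nN}-\alpha Q\|_2 < 1/a$ splits into the two linear inequalities $1-\alpha\lambda_1(Q) < 1/a$ and $\alpha\lambda_{nN}(Q)-1 < 1/a$.

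Solving the first gives $\alpha > \dfrac{1-1/a}{\lambda_1(Q)} = \dfrac{a-1}{a\lambda_1(Q)} = \alpha_0$, and solving the second gives $\alpha < \dfrac{1+1/a}{\lambda_{nN}(Q)} = \dfrac{a+1}{a\lambda_{nN}(Q)} = \alpha_1$. Intersecting the two half-lines yields exactly $\alpha\in(\alpha_0,\alpha_1)$, and on this interval $\|P\|_2 \le a\,\|I_{nN}-\alpha Q\|_2 < a\cdot(1/a) = 1$, which is the claim. The remaining point is to check that this interval is nonempty precisely when $a<C_\alpha$: the condition $\alpha_0<\alpha_1$ is $\dfrac{a-1}{\lambda_1(Q)} < \dfrac{a+1}{\lambda_{nN}(Q)}$, i.e., $(a-1)\lambda_{nN}(Q) < (a+1)\lambda_1(Q)$, i.e., $a\bigl(\lambda_{nN}(Q)-\lambda_1(Q)\bigr) < \lambda_{nN}(Q)+\lambda_1(Q)$, which rearranges to $a < C_\alpha$ by the expression \eqref{ntc_exp2} (with $\lambda_1(Q)>0$ under the standing connected-observable hypothesis, so the division is legitimate and direction-preserving). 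Hence under $a<C_\alpha$ the interval $(\alpha_0,\alpha_1)$ is a genuine nonempty open interval and every $\alpha$ in it gives $\|P\|_2<1$.

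I do not expect a serious obstacle here; the only things to be careful about are (i) that $\alpha$ is taken nonnegative so that the absolute-value / max reduction of $\|I_{nN}-\alpha Q\|_2$ is the one used — one should note that $\alpha_0\ge 0$ when $a\ge 1$ and that for $a<1$ one may simply take $\alpha_0=0$ (only the upper endpoint binds), so the stated interval is consistent in all cases; and (ii) that $\lambda_1(Q)>0$, which holds under the connectedness-plus-one-step-observability assumption invoked for the scalar-gain analysis (Lemma~\ref{lem_sc:nss}), guaranteeing the endpoints are finite and the manipulation of \eqref{ntc_exp2} is valid. The inequality $\|P\|_2\le a\|I_{nN}-\alpha Q\|_2$ is the sub-multiplicativity step already used in Section~\ref{ntc}, so nothing new is needed there. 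Everything else is the routine inversion of two affine inequalities.
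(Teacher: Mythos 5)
Your proposal is correct and follows essentially the same route as the paper: bound $\|P\|_2\le a\|I_{nN}-\alpha Q\|_2$, use the symmetry of $Q$ to write $\|I_{nN}-\alpha Q\|_2=\max\{1-\alpha\lambda_1(Q),\alpha\lambda_{nN}(Q)-1\}$, invert the two affine inequalities to obtain $(\alpha_0,\alpha_1)$, and identify non-emptiness of that interval with $a<C_\alpha$ via \eqref{ntc_exp2}. Your added remarks on $\alpha\ge 0$ and $\lambda_1(Q)>0$ are sensible housekeeping but do not change the argument.
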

\begin{proof}
Assume~$a<C_\alpha$, then from~\eqref{ntc_exp2} we have
\begin{eqnarray}\nonumber
\dfrac{1}{\lambda_1(Q)} - \dfrac{1}{a\lambda_1(Q)} < \dfrac{1}{\lambda_{nN}(Q)} + \dfrac{1}{a\lambda_{nN}(Q)}.
\end{eqnarray}
Hence, we can always choose an~$\alpha$ in the interval defined in~\eqref{alp_int}, i.e., because of the strict inequality in the above equation, we can always choose an~$\alpha$ such that
\begin{eqnarray}
\dfrac{1}{\lambda_1(Q)} - \dfrac{1}{a\lambda_1(Q)} <\alpha< \dfrac{1}{\lambda_{nN}(Q)} + \dfrac{1}{a\lambda_{nN}(Q)}.
\end{eqnarray}
The right inequality implies
\begin{eqnarray}\label{rhs_e1}
\alpha\lambda_{nN}(Q) - 1 < \dfrac{1}{a},
\end{eqnarray}
whereas the left inequality implies
\begin{eqnarray}\label{lhs_e1}
1 - \alpha\lambda_{1}(Q) < \dfrac{1}{a}.
\end{eqnarray}
Combining~\eqref{rhs_e1} and~\eqref{lhs_e1}, we note that there exists an~$\alpha$ in the interval defined in~\eqref{alp_int} such that
\begin{eqnarray}\nonumber
\max\{1 - \alpha\lambda_{1}(Q),\alpha\lambda_{nN}(Q) - 1\} < \dfrac{1}{a}.
\end{eqnarray}
Hence, we have
\begin{eqnarray}\nonumber
\|I-\alpha Q\|_2 = \max\{1 - \alpha\lambda_{1}(Q),\alpha\lambda_{nN}(Q) - 1\} < \dfrac{1}{a},
\end{eqnarray}
and the lemma follows. Furthermore, it can also be shown that~$\forall~\overline{\alpha}\geq\alpha_1$,~$\|I-\overline{\alpha} Q\|_2 = \overline{\alpha}\lambda_{nN} - 1 \geq 1/a$. Similarly,~$\forall~\underline{\alpha}\leq\alpha_1$, then~$\|I-\underline{\alpha} Q\|_2 = 1 - \underline{\alpha}\lambda_{1} \geq 1/a$. Hence, any~$\alpha\notin(\alpha_0,\alpha_1)$ does not guarantee~$\|P\|_2<1$.
\end{proof}
The above lemma provides an interval such that any choice of~$\alpha\in(\alpha_0,\alpha_1)$ guarantees a stable estimator, i.e.,~$\|P\|_2<1$, as long as~$a<C_\alpha$. We now show that that when~$a<C_\alpha$, then the~$\alpha_{\mbox{\scriptsize opt}}\in(\alpha_0,\alpha_1)$, as given in~\eqref{alpha_opt2}.
\begin{lem}
Let~$a<C_\alpha$, then~$\alpha_0<\alpha_{\mbox{\scriptsize opt}}<\alpha_1$.
\end{lem}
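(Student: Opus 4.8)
The plan is to verify the two inequalities $\alpha_0<\alpha_{\mbox{\scriptsize opt}}$ and $\alpha_{\mbox{\scriptsize opt}}<\alpha_1$ directly, by substituting the closed forms from Lemma~\ref{lem_ntc_prop} (for $\alpha_{\mbox{\scriptsize opt}}$) and the endpoint definitions from Lemma~\ref{lem_range} (for $\alpha_0,\alpha_1$), and showing that each reduces to the hypothesis $a<C_\alpha$ with $C_\alpha$ as in~\eqref{ntc_exp2}. Throughout I would use that $Q$ is symmetric positive semi-definite and, in the connected one-step-observable regime of Lemma~\ref{lem_sc:nss}, strictly positive definite, so that $0<\lambda_1(Q)\le\lambda_{nN}(Q)$ and all denominators in $\alpha_0,\alpha_1,\alpha_{\mbox{\scriptsize opt}}$ are positive.

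First I would treat $\alpha_0<\alpha_{\mbox{\scriptsize opt}}$. If $a\le 1$ then $\alpha_0=\frac{a-1}{a\lambda_1(Q)}\le 0<\alpha_{\mbox{\scriptsize opt}}$ and there is nothing to prove, so I assume $a>1$. Then both sides are positive, and cross-multiplying, $\alpha_0<\alpha_{\mbox{\scriptsize opt}}$ is equivalent to $(a-1)(\lambda_{nN}(Q)+\lambda_1(Q))<2a\lambda_1(Q)$, which after cancelling terms becomes $a(\lambda_{nN}(Q)-\lambda_1(Q))<\lambda_{nN}(Q)+\lambda_1(Q)$, i.e.\ exactly $a<C_\alpha$. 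Next, for $\alpha_{\mbox{\scriptsize opt}}<\alpha_1$, both sides are positive, so cross-multiplying gives the equivalent statement $2a\lambda_{nN}(Q)<(a+1)(\lambda_{nN}(Q)+\lambda_1(Q))$, which simplifies — again cancelling — to the same inequality $a(\lambda_{nN}(Q)-\lambda_1(Q))<\lambda_{nN}(Q)+\lambda_1(Q)$, i.e.\ $a<C_\alpha$. Hence both endpoints are strict under the hypothesis, completing the argument.

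I expect essentially no substantive obstacle here; the only care needed is the sign bookkeeping (splitting off the trivial case $a\le 1$ when handling the left endpoint, and noting $\lambda_1(Q)>0$ so that the quantities are well defined). As a sanity check and an alternative route, one can instead invoke the last sentence of the proof of Lemma~\ref{lem_range}: from $a<C_\alpha$ and~\eqref{ntc_exp2} one gets $\|I_{nN}-\alpha_{\mbox{\scriptsize opt}} Q\|_2=\frac{\lambda_{nN}(Q)-\lambda_1(Q)}{\lambda_{nN}(Q)+\lambda_1(Q)}<\frac1a$, whereas Lemma~\ref{lem_range} shows $\|I_{nN}-\alpha Q\|_2\ge\frac1a$ for every $\alpha\notin(\alpha_0,\alpha_1)$; therefore $\alpha_{\mbox{\scriptsize opt}}$ must lie in $(\alpha_0,\alpha_1)$, which is the claim.
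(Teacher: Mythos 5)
Your proof is correct and follows essentially the same route as the paper: both arguments are direct algebraic verifications that each endpoint inequality, after substituting $\alpha_{\mbox{\scriptsize opt}}=\tfrac{2}{\lambda_{nN}(Q)+\lambda_1(Q)}$ and the definitions of $\alpha_0,\alpha_1$, reduces to $a(\lambda_{nN}(Q)-\lambda_1(Q))<\lambda_{nN}(Q)+\lambda_1(Q)$, i.e.\ to the hypothesis $a<C_\alpha$ (the paper just runs the same algebra in the forward direction via $\tfrac{C_\alpha\pm1}{C_\alpha}$ versus $\tfrac{a\pm1}{a}$). Your handling of the sign cases and the alternative argument via the norm characterization are fine additions but not substantively different.
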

\begin{proof}
Since~$a<C_\alpha$, we have
\begin{eqnarray}
\dfrac{1}{C_\alpha} < \dfrac{1}{a},
\end{eqnarray}
which implies that
\begin{eqnarray}\label{lemc1}
\dfrac{C_\alpha+1}{C_\alpha} &<& \dfrac{a+1}{a},\\\label{lemc2}
\dfrac{C_\alpha-1}{C_\alpha} &>& \dfrac{a-1}{a}.
\end{eqnarray}
Now note that
\begin{eqnarray}\label{lemd1}
\dfrac{C_\alpha+1}{C_\alpha} &=& \dfrac{2\lambda_{nN}(Q)}{\lambda_{nN}(Q)+\lambda_{1}(Q)},\\\label{lemd2}
\dfrac{C_\alpha-1}{C_\alpha} &=& \dfrac{2\lambda_{1}(Q)}{\lambda_{nN}(Q)+\lambda_{1}(Q)}.
\end{eqnarray}
Combining~\eqref{lemc1} with~\eqref{lemd1}, and~\eqref{lemc2} with~\eqref{lemd2}, we get
\begin{eqnarray}
\dfrac{a-1}{a\lambda_{1}(Q)}<\dfrac{2}{\lambda_{nN}(Q)+\lambda_{1}(Q)}<\dfrac{a+1}{a\lambda_{nN}(Q)}
\end{eqnarray}
and the lemma follows.
\end{proof}
Further note that the length of the interval~$(\alpha_0,\alpha_1)$ is given by
\begin{eqnarray}\nonumber
\alpha_1-\alpha_0 &=& \dfrac{a+1}{a\lambda_{nN}(Q)} - \dfrac{a-1}{a\lambda_{1}(Q)},\\\nonumber
&=& \dfrac{\lambda_{1}(Q)+\lambda_{nN}(Q)-a(\lambda_{nN}(Q)-\lambda_{1}(Q))}{a\lambda_{1}(Q)\lambda_{nN}(Q)},\\\nonumber
&=&(\lambda_{nN}(Q)-\lambda_{1}(Q))\dfrac{\frac{\lambda_{1}(Q)+\lambda_{nN}(Q)}{\lambda_{nN}(Q)-\lambda_{1}(Q)}-a}{a\lambda_{1}(Q)\lambda_{nN}(Q)},\\
&=&\left(\dfrac{1}{\lambda_{1}(Q)}-\dfrac{1}{\lambda_{nN}(Q)}\right)\left(\dfrac{C_\alpha}{a} - 1\right),
\end{eqnarray}
confirming a non-empty interval for~$\alpha$ when~$a<C_\alpha$ or~$C_\alpha/a>1$. The interval length is larger when~$C_\alpha\gg a$, providing more choices for choosing an~$\alpha$ that give a stable estimator. On the other hand, the interval length is smaller when~$a\rightarrow C_\alpha$.

Note that the~$\alpha$ interval,~$(\alpha_0,\alpha_1)$, depends on~$a=\|A\|_2$ and the eigenvalues of~$Q$, which, in turn, depends on the graph Laplacian,~$L$, and all of the observation models. Hence, computing this interval to select a suitable~$\alpha$ requires global knowledge that may not be available at each agent. In Section~\ref{ld}, we provide an interval for~$\alpha$ that is a subset of~\eqref{alp_int} but can be computed from quantities that may be locally available.

\subsection{Performance of the scalar gain estimator}
In case of scalar gain estimators, we have
\begin{eqnarray}\nonumber
\|S_\infty\|_2 &\leq& \dfrac{N\|V\|_2 + \alpha^2\|(I_N\otimes A) (\mc{A}\otimes I_n)\overline{R}(\mc{A}\otimes I_n)^T(I_N\otimes A)^T\|_2}{1 - a^2\|I_{nN} - \alpha Q\|_2^2},\\\nonumber
&\leq& \dfrac{N\|V\|_2 + \alpha^2a^2\|\mc{A}\|_2^2\|\overline{R}\|_2}{1 - a^2\|I_{nN} - \alpha Q\|_2^2},\\
\Rightarrow\dfrac{1}{N}\|S_\infty\|_2&\leq& \dfrac{\|V\|_2 + \alpha^2a^2N\|\overline{R}\|_2}{1 - a^2\|I_{nN} - \alpha Q\|_2^2},
\end{eqnarray}
Clearly, a convex optimization similar to~\eqref{co_perf} may be formulated to design~$\alpha$ that results into a stable estimator with a performance guarantee.

{\bf Remarks:}
\begin{enumerate}[(i)]
\item We now show the performance bound with the optimal value of~$\alpha$ from~\eqref{alpha_opt2}. With~$\alpha = \alpha_{\mbox{\scriptsize opt}}$, we have~$\|I_{nN} - \alpha Q\|_2=1/C_\alpha$. The above upper bound can be simplified as
\begin{eqnarray}\nonumber
\dfrac{1}{N}\|S_\infty\|_2&\leq&  \dfrac{\|V\|_2 + (\frac{2}{C_\alpha(\lambda_{nN}(Q)+\lambda_{1}(Q))})^2a^2N\|\overline{R}\|_2}{1 - a^2(\frac{1}{C_\alpha})^2},\\\nonumber
&=& \dfrac{C_\alpha^2\|V\|_2 + \frac{4}{(\lambda_{nN}(Q)+\lambda_{1}(Q))^2}a^2N\|\overline{R}\|_2}{C_\alpha^2 - a^2},\\\nonumber
&=& \dfrac{\frac{C_\alpha^2}{a^2}\|V\|_2 + \frac{4}{(\lambda_{nN}(Q)+\lambda_{1}(Q))^2}N\|\overline{R}\|_2}{\frac{C_\alpha^2}{a^2} - 1},
\end{eqnarray}
Clearly, when~$a<C_\alpha$, the steady-state error remains bounded.

\item Consider the performance of neutrally-stable scalar systems (see Section~\ref{sc:nss}), i.e., we have~$a=1, n=1$, where~$Q=L+D_H$. The steady-state performance at each agent is
\begin{eqnarray}\nonumber
\dfrac{1}{N}\|S_\infty\|_2&\leq&  \dfrac{\sigma_v^2 + \alpha^2N\|\overline{R}\|_2}{1 - \|I_{n} - \alpha (L+D_H)\|_2^2},
\end{eqnarray}
where~$\|V\|_2\triangleq\sigma_v^2$, similar to the noise variance in the scalar dynamics,~$x_k$. It is straightforward to note that there exists~$\alpha$ such that~$0\leq\|I_{n} - \alpha Q\|_2^2<1$ (following Lemma~\ref{lem_sc:nss}) and the denominator is~$>0$. Following~\cite{4739167}, if we choose~$\alpha$ such that~$\alpha(k)\rightarrow 0$ (but sum to infinity) and we have diminishing innovation, i.e.,~$\sigma_v^2\rightarrow 0$, the steady-state error is bounded above by~$0$ (perfect learning).

\item Assume that the observation models at each agent is identical, i.e.,~$y_{i,k} = x_k + r_{i,k}$ and let~$\|r_k\|_2<\sigma^2_r,~\forall~k$, then
\begin{eqnarray}\nonumber
\dfrac{1}{N}\|S_\infty\|_2&\leq&  \dfrac{\sigma_v^2 + \alpha^2N\sigma_r^2}{1 - \|I_{n} - \alpha (L+I_N)\|_2^2}.
\end{eqnarray}
When we do not have diminishing innovation, i.e.,~$\sigma_v^2$ is fixed for all~$k$, then we can easily see that choosing an~$\alpha$ relates to the trade-off between the convergence and steady-state performance of the single time-scale estimator. Typically,~$\alpha\propto \frac{1}{\sqrt{N}}$ to remove the dependence of the steady-state error bound on the number of agents,~$N$.
\end{enumerate}

\section{Local design of~$\alpha$}\label{ld}
The discussion on NTC and scalar gain estimators requires centralized computation of the estimator parameters ($W,B$, or~$\alpha$ for scalar gain estimators). In this section, we present methodologies to obtain the scalar parameter,~$\alpha$, locally at each agent. For this purpose, we use Lemma~\ref{lem_range} that provides a range of~$\alpha\in(\alpha_0,\alpha_1)$ to implement stable estimators and consider strategies to compute this interval locally.

We first provide some eigenvalue bounds to facilitate the development in this section. We then consider~$m$-circulant graphs and special observation models to provide a local choice of~$\alpha$. Finally, we provide generalizations to the~$m$-circulant graphs and arbitrary observation models.

\subsection{Eigenvalue bounds}
To derive an interval of~$\alpha$ that may be locally computed, we use the following results from the matrix perturbation theory~\cite{stewart_book}. Let~$A_1$ and~$B_1$ be symmetric positive semi-definite matrices with eigenvalues~$0\leq\lambda_1(A_1)\leq\ldots\leq\lambda_{\max}(A_1)$, and~$0\leq\lambda_1(B_1)\leq\ldots\leq\lambda_{\max}(B_1)$, respectively. Then
\begin{subequations}\label{lem_ebnd}
\begin{eqnarray}
\lambda_1(A_1) &\leq& \lambda_1(A_1+B_1),\\
\lambda_{\max}(A_1+B_1) &\leq& \lambda_{\max}(A_1) + \lambda_{\max}(B_1).
\end{eqnarray}
\end{subequations}
In addition, we will use the following lemma from~\cite{ipsen:09}.

\begin{lem}[Theorem 2.1 in~\cite{ipsen:09}]\label{ipsen_th}
Let~$\mb{z}$ be an arbitrary column vector and let~$A_1$ be a symmetric positive definite matrix with eigenvalues,~$0\leq\lambda_1(A_1)\leq\ldots\lambda_{\max}(A_1),$ then
\begin{eqnarray}\nonumber
\lambda_1(A_1+\mb{z}\mb{z}^T) \geq \lambda_1(A_1) + \dfrac{1}{2}\left(\mbox{gap}_{1} + \|\mb{z}\|^2 - \sqrt{(\mbox{gap}_{1} + \|\mb{z}\|^2)^2 - 4\mbox{gap}_{1}|z_1|^2}\right),
\end{eqnarray}
where
\begin{eqnarray}\nonumber
\mbox{gap}_{1} &=& \lambda_2(A_1) - \lambda_1(A_1) \geq 0,\\\nonumber
z_1 &=& \mb{q}^\ast_1(A) \mb{z},
\end{eqnarray}
the column vector~$\mb{q}_1(A)$ is the eigenvector of~$A_1$ corresponding to the minimum eigenvalue~$\lambda_1(A_1)$ and~`$\ast$' denotes the Hermitian.
\end{lem}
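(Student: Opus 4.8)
The plan is to reduce this rank-one perturbation bound to a one-dimensional \emph{secular equation} and then lower-bound its smallest root. First I would diagonalize $A_1 = Q\Lambda Q^\ast$ with $\Lambda=\mbox{diag}(\lambda_1(A_1),\ldots,\lambda_{\max}(A_1))$ and $Q$ unitary; since $A_1+\mb{z}\mb{z}^T = Q(\Lambda+\mb{w}\mb{w}^\ast)Q^\ast$ with $\mb{w}\triangleq Q^\ast\mb{z}$, the eigenvalues of $A_1+\mb{z}\mb{z}^T$ coincide with those of $\Lambda+\mb{w}\mb{w}^\ast$, while $\|\mb{w}\|=\|\mb{z}\|$ and the first coordinate is $w_1=\mb{q}_1^\ast(A_1)\mb{z}=z_1$. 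After the standard \emph{deflation} (merge repeated $\lambda_i$, drop coordinates with $w_i=0$), the eigenvalue $\mu_1\triangleq\lambda_1(A_1+\mb{z}\mb{z}^T)$ is the smallest root of
\[ f(\mu)\triangleq 1+\sum_i\frac{|w_i|^2}{\lambda_i(A_1)-\mu}=0, \]
and, because $\mb{z}\mb{z}^T\succeq 0$, the interlacing inequalities give $\lambda_1(A_1)\leq\mu_1\leq\lambda_2(A_1)$.

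Next I would bound the tail of $f$ on $(\lambda_1(A_1),\lambda_2(A_1))$. For $i\geq 2$ and $\mu<\lambda_2(A_1)$ one has $\lambda_i(A_1)-\mu\geq\lambda_2(A_1)-\mu>0$, hence $\sum_{i\geq 2}\frac{|w_i|^2}{\lambda_i(A_1)-\mu}\leq\frac{\|\mb{z}\|^2-|z_1|^2}{\lambda_2(A_1)-\mu}$, so that $f(\mu)\leq g(\mu)$ on that interval, where
\[ g(\mu)\triangleq 1+\frac{|z_1|^2}{\lambda_1(A_1)-\mu}+\frac{\|\mb{z}\|^2-|z_1|^2}{\lambda_2(A_1)-\mu}. \]
Both $f$ and $g$ are strictly increasing on $(\lambda_1(A_1),\lambda_2(A_1))$ (each derivative is a positive combination of the $(\lambda_i(A_1)-\mu)^{-2}$), each tends to $-\infty$ at the left endpoint and to $+\infty$ at the right, so each has a unique root there; call the root of $g$ by $t^\ast$. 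Since $g\geq f$ pointwise on the interval, $f(t^\ast)\leq g(t^\ast)=0=f(\mu_1)$, and monotonicity of $f$ forces $\mu_1\geq t^\ast$.

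It then remains to solve $g(\mu)=0$ in closed form: setting $s\triangleq\mu-\lambda_1(A_1)$ and $\mbox{gap}_1\triangleq\lambda_2(A_1)-\lambda_1(A_1)$ and clearing denominators (legitimate for $s\in(0,\mbox{gap}_1)$) reduces $g(\mu)=0$ to the quadratic
\[ s^2-(\mbox{gap}_1+\|\mb{z}\|^2)\,s+\mbox{gap}_1\,|z_1|^2=0. \]
Its discriminant is nonnegative because $|z_1|^2\leq\|\mb{z}\|^2$; evaluating the quadratic at $s=0$ (value $\mbox{gap}_1|z_1|^2\geq 0$) and at $s=\mbox{gap}_1$ (value $\mbox{gap}_1(|z_1|^2-\|\mb{z}\|^2)\leq 0$) shows that the root lying in $(0,\mbox{gap}_1)$ is the \emph{smaller} one, namely $t^\ast-\lambda_1(A_1)=\frac{1}{2}\big(\mbox{gap}_1+\|\mb{z}\|^2-\sqrt{(\mbox{gap}_1+\|\mb{z}\|^2)^2-4\,\mbox{gap}_1\,|z_1|^2}\big)$, which is exactly the asserted bound. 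I expect the main obstacle to be the bookkeeping around degeneracies — a repeated smallest eigenvalue, $z_1=0$ (i.e.\ $\mb{z}\perp\mb{q}_1(A_1)$), $\mbox{gap}_1=0$, or $\mu_1=\lambda_1(A_1)$ — rather than the core estimate; in each such case the right-hand side collapses to $\lambda_1(A_1)$, so the inequality still holds. Alternatively, since the statement is verbatim Theorem~2.1 of~\cite{ipsen:09}, one may simply invoke that reference.
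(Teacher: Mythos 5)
Your argument is correct, and it is worth noting that the paper itself offers no proof of this lemma at all: the statement is imported verbatim as Theorem~2.1 of~\cite{ipsen:09} and simply cited, so your closing remark ("one may simply invoke that reference") is exactly what the authors do. Your self-contained derivation is sound and is essentially the secular-equation argument underlying the cited result: the reduction of $A_1+\mb{z}\mb{z}^T$ to $\Lambda+\mb{w}\mb{w}^\ast$ with $\|\mb{w}\|=\|\mb{z}\|$ and $w_1=z_1$ is right; the interlacing localization $\lambda_1(A_1)\leq\mu_1\leq\lambda_2(A_1)$ is right; the pointwise domination $f\leq g$ on $(\lambda_1(A_1),\lambda_2(A_1))$ obtained by lumping all coordinates $i\geq 2$ at $\lambda_2(A_1)$ is the key step (equivalently, comparing with the smallest eigenvalue of the $2\times 2$ arrowhead matrix built from $\lambda_1,\lambda_2,|z_1|^2,\|\mb{z}\|^2-|z_1|^2$); and the monotonicity argument $f(t^\ast)\leq g(t^\ast)=0=f(\mu_1)\Rightarrow \mu_1\geq t^\ast$ together with the identification of $t^\ast$ as the \emph{smaller} root of $s^2-(\mbox{gap}_1+\|\mb{z}\|^2)s+\mbox{gap}_1|z_1|^2=0$ is carried out correctly, including the sign checks at $s=0$ and $s=\mbox{gap}_1$. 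Your handling of the degenerate cases ($z_1=0$, $\mbox{gap}_1=0$, repeated smallest eigenvalue) is also adequate, since in each the right-hand side collapses to $\lambda_1(A_1)$ and the claim reduces to interlacing. One small observation: nothing in your proof uses positive (semi)definiteness of $A_1$, so you have in fact proved the bound for an arbitrary Hermitian $A_1$, which is the generality in which~\cite{ipsen:09} states it; the paper's hypothesis is stronger than needed.
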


Now consider~$A_1$ in the above lemma to be an~$N\times N$ graph Laplacian matrix,~$L$. Then the eigenvector~$\mb{q}_1(L)$ corresponding to the minimum eigenvalue~$\lambda_1(L)=0$ is
\begin{eqnarray}\nonumber
\mb{q}_1(L) = \left[\dfrac{1}{\sqrt{N}},~\ldots,\dfrac{1}{\sqrt{N}}\right]^T.
\end{eqnarray}
Lemma~\ref{ipsen_th} translates into the following lower bound on the minimum eigenvalue of~$L+\mb{z}\mb{z}^T$.
\begin{eqnarray}\nonumber
\lambda_1(L+\mb{z}\mb{z}^T) &\geq& \dfrac{1}{2}\left(\lambda_2(L) + \|\mb{z}\|^2 - \sqrt{(\lambda_2(L) + \|\mb{z}\|^2)^2 -4\lambda_2(L)|z_1|^2}\right),\\\nonumber&\triangleq& \tau(\lambda_2(L), \mb{z}).
\end{eqnarray}
Clearly, the above lower bound is only non-trivial ($\tau(\lambda_2(L), \mb{z})>0$) when~$\lambda_2(L)>0$, i.e., for connected graphs and is~$0$ for disconnected graphs ($\lambda_2(L)=0$). In particular, if we choose~$\mb{z}=\mb{z}_1=[1,~0,\ldots,~0]$, then
\begin{eqnarray}\nonumber
\tau(\lambda_2(L),\mb{z}_1)= \dfrac{1}{2}\left(\lambda_2(L) + 1 - \sqrt{(\lambda_2(L) + 1)^2 - \dfrac{4\lambda_2(L)}{N}}\right).
\end{eqnarray}

\subsection{Graphs isomorphic to~$m$-circulant graphs}
In this section, we consider a particular example and derive the eigenvalue bounds in that case using the results from the previous subsection. We choose an~$m$-circulant communication graph\footnote{An~$m$-circulant graph is a graph where the~$N$ nodes are arranged as distinct points on a circle and each node is connected to~$m$-forward neighbors.} with~$N$ nodes and an~$n$-dimensional dynamical system such that~$n=N$. We choose the following observation model at the~$i$th agent:
\begin{eqnarray}\label{sc_obs}
y_{k}^i = x^i_k + r_k^i,
\end{eqnarray}
i.e., the observation matrix at the~$i$th agent,~$H_i$, is an~$N$-dimensional row vector with~$1$ at the~$i$th location and zeros everywhere else. We partition the matrix~$D_H$ (see~\eqref{DH}) as~$\overline{D}_H + \underline{D}_H$, where
\begin{eqnarray}
\overline{D}_H = \mbox{blockdiag}[H_1^TH_1,\ldots,H_N^TH_N],
\end{eqnarray}\label{uDH}
and~$\underline{D}_H = D_H - \overline{D}_H$. From~\eqref{lem_ebnd}, we have
\begin{eqnarray}\label{lemkj}
\lambda_1(L\otimes I_n + \overline{D}_H + \underline{D}_H)  \geq \lambda_1(L\otimes I_n + \overline{D}_H).
\end{eqnarray}
We have the following lemma.
\begin{lem}\label{spec_lem}
For~$m$-circulant graphs and observation models of the form~\eqref{sc_obs}, we have
\begin{eqnarray}\nonumber
\mathcal{S}(L\otimes I_n + \overline{D}_H) = \mathcal{S}(I_n\otimes (L + H_i^TH_i) ),
\end{eqnarray}
for any~$i$, where~$\mathcal{S}$ denotes the spectrum (eigenvalues) of a matrix.
\end{lem}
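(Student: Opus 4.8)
The plan is to construct an explicit permutation similarity that turns $L\otimes I_n+\overline{D}_H$ into a block-diagonal matrix whose $N$ diagonal blocks are the rank-one perturbations $L+H_k^TH_k$, $k=1,\dots,N$, and then to use the circulant structure of $L$ to show that all these blocks are cospectral. First I would rewrite $\overline{D}_H$ in Kronecker form: for the observation model~\eqref{sc_obs} the matrix $H_i$ is the row vector with a single $1$ in its $i$th entry, so $H_i^TH_i=\mb{e}_i\mb{e}_i^T$, and, using $n=N$,
\begin{eqnarray}\nonumber
\overline{D}_H=\mbox{blockdiag}[\mb{e}_1\mb{e}_1^T,\dots,\mb{e}_N\mb{e}_N^T]=\sum_{k=1}^N(\mb{e}_k\mb{e}_k^T)\otimes(\mb{e}_k\mb{e}_k^T).
\end{eqnarray}
The crucial observation is that every summand is invariant under the perfect-shuffle permutation $S$ on $\mathbb{R}^{nN}$, i.e., the orthogonal permutation matrix with $S(X\otimes Y)S^T=Y\otimes X$ for all $n\times n$ matrices $X,Y$. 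Hence $S\overline{D}_HS^T=\overline{D}_H$, whereas $S(L\otimes I_n)S^T=I_n\otimes L$.

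Combining these, $S(L\otimes I_n+\overline{D}_H)S^T=I_n\otimes L+\overline{D}_H=\mbox{blockdiag}[L+\mb{e}_1\mb{e}_1^T,\dots,L+\mb{e}_N\mb{e}_N^T]$, and since $S$ is orthogonal the spectrum is unchanged; thus $\mathcal{S}(L\otimes I_n+\overline{D}_H)$ is the union, counting multiplicities, of the spectra $\mathcal{S}(L+\mb{e}_k\mb{e}_k^T)$ over $k=1,\dots,N$. To close the argument I would invoke circulance: for an $m$-circulant graph the Laplacian commutes with the cyclic-shift permutation $\Pi$ (so $\Pi L\Pi^T=L$), and $\Pi(\mb{e}_k\mb{e}_k^T)\Pi^T=\mb{e}_{k+1}\mb{e}_{k+1}^T$ with indices modulo $N$; iterating gives $\Pi^{k-i}(L+\mb{e}_i\mb{e}_i^T)\Pi^{-(k-i)}=L+\mb{e}_k\mb{e}_k^T$, so every block is similar to $L+\mb{e}_i\mb{e}_i^T=L+H_i^TH_i$ for any fixed $i$. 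Therefore $\mathcal{S}(L\otimes I_n+\overline{D}_H)$ consists of $N$ copies of $\mathcal{S}(L+H_i^TH_i)$, which is precisely $\mathcal{S}(I_n\otimes(L+H_i^TH_i))$.

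The step I expect to be the main obstacle is the first one: spotting the Kronecker decomposition $\overline{D}_H=\sum_k(\mb{e}_k\mb{e}_k^T)\otimes(\mb{e}_k\mb{e}_k^T)$ and noticing that it is fixed by the perfect shuffle, since this is exactly what makes the off-diagonal Laplacian couplings of $L\otimes I_n$ collapse after conjugation. Once that is established, the circulance step is a short symmetry computation, and it is also the ingredient that justifies the ``for any $i$'' in the statement. A minor point to handle carefully is fixing the index-modulo-$N$ convention so that $\Pi$ cyclically permutes the standard basis in the direction consistent with the $m$-circulant adjacency.
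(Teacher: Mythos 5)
Your proof is correct and follows essentially the same route as the paper: the perfect shuffle you use is exactly the paper's stride permutation $T$, and both arguments conclude by block-diagonalizing into $L+H_j^TH_j$ and invoking circulant symmetry. If anything, your version is more rigorous, since you justify via the Kronecker decomposition of $\overline{D}_H$ and the explicit cyclic-shift conjugation two facts that the paper only asserts (that $\overline{D}_H$ is fixed by the shuffle, and that the blocks are cospectral).
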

\begin{proof}
We define an~$N\times N$ (stride) permutation matrix,~$T$, such that~$T(L\otimes I_n)T^T = I_n\otimes L.$ With this permutation matrix, it can be verified that
\begin{eqnarray}\nonumber
T(L\otimes I_n + \overline{D}_H)T^T &=& I_n\otimes L + \overline{D}_H,
\end{eqnarray}
since~$\overline{D}_H$ is diagonal and consists of~$1$'s only at those diagonal locations that are left unchanged with the permutation,~$T$. The matrix~$I_n\otimes L + \overline{D}_H$ consists of~$N$~$n\times n$ blocks where the~$j$th block is given by~$L+H_j^TH_j$. The matrix~$H_j^TH_j$ is an~$n\times n$ matrix with a~$1$ at the~$(j,j)$ location and zeros everywhere else. Hence,~$L+H_j^TH_j$ is a Laplacian matrix whose~$(j,j)$ diagonal element is perturbed by~$1$. Since~$L$ corresponds to the Laplacian matrix of an~$m$-circulant graph, the spectrum is left unchanged regardless of which diagonal element is perturbed, i.e.,
\begin{eqnarray}\nonumber
\mathcal{S}(L+H_j^TH_j) = \mathcal{S}(L+H_i^TH_i), \qquad \forall~j\neq i.
\end{eqnarray}
Noting that the spectrum of a matrix does not change under a similarity transformation with a permutation matrix, we have for any~$i$
\begin{eqnarray}\nonumber
\mathcal{S}(L\otimes I_n + \overline{D}_H) = \mathcal{S}(T(L\otimes I_n + \overline{D}_H)T)
= \mathcal{S}(I_n\otimes L  + \overline{D}_H) = \mathcal{S}(I_n\otimes (L  + H_i^TH_i)).
\end{eqnarray}
\end{proof}

With the above lemma,~\eqref{lemkj} is further bounded below by
\begin{eqnarray}\nonumber
\lambda_1(L\otimes I_n + D_H)  &\geq& \lambda_1(I_n\otimes (L  + H_i^TH_i)),\\\nonumber
&=& \lambda_1(L  + H_i^TH_i),\\\label{great_lb}
&\geq& \tau(\lambda_2(L),H_i^T),
\end{eqnarray}
from~\eqref{ipsen_th}. Furthermore, for the observation models given by~\eqref{sc_obs}, the matrix~$D_H$ is diagonal whose~$(i,i)$ element is either~$0$ or~$1$. Hence, from~\eqref{lem_ebnd},
\begin{eqnarray}\nonumber
\lambda_{nN}(L\otimes I_n + D_H) &\leq& \lambda_{nN}(L\otimes I_n) + \lambda_{nN}(D_H),\\\label{great_ub}
&=&\lambda_{nN}(L\otimes I_n) + 1.
\end{eqnarray}

The discussion in this subsection holds true for any graph that is isomorphic to an~$m$-circulant graph. This is because a vertex relabeling of such a graph results into an~$m$-circulant graph. Hence, for any graph that is isomorphic to an~$m$-circulant graph,~\eqref{great_lb} serves as a lower bound of~$\lambda_1(L\otimes I_n + D_H)$ and~\eqref{great_ub} serves as an upper bound for~$\lambda_{nN}(L\otimes I_n + D_H)$. Thus, we have for any graph that is isomorphic to an~$m$-circulant graph and observation models of the form~\eqref{sc_obs},
\begin{eqnarray}\label{great_ulb}
\tau(\lambda_2(L),\mb{z}_1) \leq \lambda_1(Q) \leq \lambda_{nN}(Q) \leq 1 + \lambda_{nN}(L).
\end{eqnarray}
The following lemma now provides the main result of this section.
\begin{lem}
Consider a system with observation models in~\eqref{sc_obs} and a communication graph that is isomorphic to~$m$-circulant graphs. Let~$a<C_\alpha$, then~$\|P\|_2<1$ for
\begin{eqnarray}\label{alp_newint}
\alpha \in \left(\dfrac{a-1}{a\tau(\lambda_2(L),\mb{z}_1)},\dfrac{a+1}{a(1 + \lambda_{nN}(L))}\right).
\end{eqnarray}
\end{lem}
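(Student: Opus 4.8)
The plan is to obtain~\eqref{alp_newint} as a corollary of Lemma~\ref{lem_range} together with the eigenvalue estimates in~\eqref{great_ulb}, rather than re-running the norm computation from scratch. By Lemma~\ref{lem_range}, whenever $a<C_\alpha$ every $\alpha$ in the global interval $(\alpha_0,\alpha_1)=\bigl(\tfrac{a-1}{a\lambda_1(Q)},\tfrac{a+1}{a\lambda_{nN}(Q)}\bigr)$ of~\eqref{alp_int} already yields $\|P\|_2<1$. Hence it suffices to show that the locally-computable interval in~\eqref{alp_newint} is contained in $(\alpha_0,\alpha_1)$; the stability conclusion then transfers for free.

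First I would compare the right endpoints: \eqref{great_ulb} gives $\lambda_{nN}(Q)\le 1+\lambda_{nN}(L)$, so $\tfrac{1}{1+\lambda_{nN}(L)}\le\tfrac{1}{\lambda_{nN}(Q)}$, and multiplying by the positive constant $\tfrac{a+1}{a}$ yields $\tfrac{a+1}{a(1+\lambda_{nN}(L))}\le\alpha_1$. Next the left endpoints: \eqref{great_ulb} also gives $\tau(\lambda_2(L),\mb{z}_1)\le\lambda_1(Q)$, hence $\tfrac{1}{\tau(\lambda_2(L),\mb{z}_1)}\ge\tfrac{1}{\lambda_1(Q)}$; in the regime of interest, $a\ge 1$, the factor $\tfrac{a-1}{a}$ is nonnegative, so $\tfrac{a-1}{a\tau(\lambda_2(L),\mb{z}_1)}\ge\tfrac{a-1}{a\lambda_1(Q)}=\alpha_0$. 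Combining the two endpoint inequalities shows $\bigl(\tfrac{a-1}{a\tau(\lambda_2(L),\mb{z}_1)},\tfrac{a+1}{a(1+\lambda_{nN}(L))}\bigr)\subseteq(\alpha_0,\alpha_1)$, and Lemma~\ref{lem_range} then completes the proof.

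The only delicate point --- more bookkeeping than genuine obstacle --- is the direction of monotonicity on the left endpoint: the bound $\tau\le\lambda_1(Q)$ pushes $\tfrac{a-1}{a\tau}$ above $\alpha_0$ only because $a-1\ge 0$. For a strictly stable system ($a<1$) this inequality flips, but there stability is controlled by the right endpoint alone, since $1-\alpha\lambda_1(Q)\le 1<1/a$ for every $\alpha\ge 0$; so the stated interval is most naturally read in the regime $a\ge 1$ (unstable or neutrally-stable dynamics), which is also the case of interest. A related remark, not needed for the claim but worth recording, is that $a<C_\alpha$ does not by itself force~\eqref{alp_newint} to be nonempty: nonemptiness requires $a$ below the localized capacity estimate $\underline C_\alpha\triangleq\tfrac{(1+\lambda_{nN}(L))+\tau(\lambda_2(L),\mb{z}_1)}{(1+\lambda_{nN}(L))-\tau(\lambda_2(L),\mb{z}_1)}$, and~\eqref{great_ulb} together with the monotonicity of $(x,y)\mapsto\tfrac{x+y}{x-y}$ gives $\underline C_\alpha\le C_\alpha$; for $a\in[\underline C_\alpha,C_\alpha)$ the localized interval is empty and the conclusion holds vacuously.
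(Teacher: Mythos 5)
Your proof is correct and is essentially the paper's own argument: the paper likewise deduces the lemma in one line by noting that \eqref{great_ulb} makes the interval \eqref{alp_newint} a subset of \eqref{alp_int} and then invoking Lemma~\ref{lem_range}. Your additional bookkeeping (the left-endpoint inclusion needing $a\geq 1$, and the observation that $a<C_\alpha$ alone does not guarantee the localized interval is nonempty) is sound and in fact more careful than the paper, which passes over both points silently.
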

\begin{proof}
Because of~\eqref{great_ulb}, the interval defined in~\eqref{alp_newint} is a subset of the interval defined in~\eqref{alp_int} and the lemma follows form Lemma~\ref{lem_range}.
\end{proof}

\subsection{Generalization}
In this section, we generalize the development in the previous subsection to arbitrary graphs that contain at least one simple cycle of length~$N$ as a subgraph. We denote such a graph as~$\overline{\mathbb{G}_{\circlearrowleft N}}=(\mathcal{V},\overline{\mathcal{E}_{\circlearrowleft N}})$ and denote its Laplacian matrix as~$\overline{L_{\circlearrowleft N}}$. Let~$\mathbb{G}_{\circlearrowleft N}=(\mathcal{V},{\mathcal{E}_{\circlearrowleft N}})$ denote a simple cycle of length~$N$ and let~$L_{\circlearrowleft N}$ denote its Laplacian matrix. Then,~$\overline{L_{\circlearrowleft N}} = L_{\circlearrowleft N} + L_1$, where~$L_1$ corresponds to the Laplacian matrix of~$(\mathcal{V},\overline{\mathcal{E}_{\circlearrowleft N}}\setminus\mathcal{E}_{\circlearrowleft N})$. We assume that the observation models are such that the matrix~$D_H$ can be decomposed into~$\overline{D}_H$ (see~\eqref{uDH}) plus some other~${D}_{H1}$, where~$D_{H1}$ is a positive semi-definite matrix.
From~\eqref{lem_ebnd}, we have
\begin{eqnarray}\nonumber
\lambda_1(\overline{L_{\circlearrowleft N}}\otimes I_n + D_H) \geq \lambda_1(L_{\circlearrowleft N}\otimes I_n + \overline{D}_H).
\end{eqnarray}
Hence, for all graphs~$\overline{\mathbb{G}_{\circlearrowleft N}}$ and observation models that can be decomposed as~$\overline{D}_H+{D}_{H1}$ with~${D}_{H1}$ being positive semi definite, any
\begin{eqnarray}\label{alp_newint2}
\alpha \in \left(\dfrac{a-1}{a\tau(\lambda_2(L),\mb{z}_1)},\dfrac{a+1}{a(\lambda_{nN}(D_H) + \lambda_{nN}(L))}\right)
\end{eqnarray}
results into stable error processes. Clearly, the above interval is only non-empty when~$a<C_{\mbox{\scriptsize loc}}$, where
\begin{eqnarray}
C_{\mbox{\scriptsize loc}} = \dfrac{\lambda_{nN}(D_H) + \lambda_{nN}(L) + \tau(\lambda_2(L),\mb{z}_1)}{\lambda_{nN}(D_H) + \lambda_{nN}(L) - \tau(\lambda_2(L),\mb{z}_1)}.
\end{eqnarray}
It can be shown that~$C_{\mbox{\scriptsize loc}}\leq C_\alpha$. The gap between the two capacities (local,$C_{\mbox{\scriptsize loc}}$ and scalar optimal,~$C_\alpha$) depends on the tightness of the eigenvalue bounds used and on the interval length (proportional to~$C_\alpha/1 -1~$).

For most classes of structured graphs, the minimum and maximum eigenvalues are known in closed-form as a function of~$N$ and hence, can be used at each agent to compute the interval for~$\alpha$ in~\eqref{alp_int}. In the case of non-structured graphs, one may use properties of a given graph to bound the minimum and maximum eigenvalues of its Laplacian and use the interval in~\eqref{alp_newint2}. Similarly, for structured observation models,~$\lambda_{nN}(D_H)$ may be computed or upper-bounded (in the worst case) at the agents to locally compute the interval for~$\alpha$. Clearly, when we use a tighter interval for~$\alpha$ using a local procedure, we pay a price in terms of a loss in capacity.

\section{Illustration}\label{exam}
In this section, we provide an illustration of the concepts introduced in this paper. Consider an~$N$-node~$m$-circulant graph,~$\mathbb{G}_{\circlearrowleft^m}$, i.e., the nodes are arranged as distinct points on a circle and each node is connected to its next~$m$ neighbors. For such a graph, the Laplacian matrix,~$L_{\circlearrowleft^m}$, is a circulant matrix that can be diagonalized by an~$N\times N$ normalized DFT matrix. We consider~$M=1,2,3$, as shown in Fig.~\ref{mgr}.
\begin{figure}
\centering
\includegraphics[width=5in]{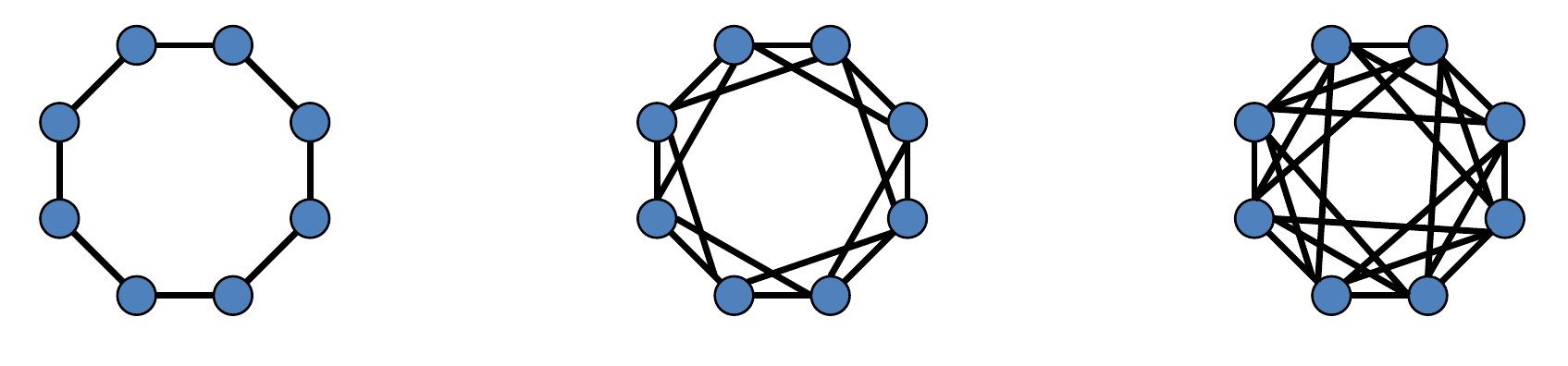}
\caption{Circulant graphs with~$m=1,2,3$.}
\label{mgr}
\end{figure}
We further consider an~$n=N$ dimensional state-space and scalar observation models such that the~$i$th node observes a noisy version of~$x_k^i$ at time~$k$, as given in~\eqref{sc_obs}. The capacity,~$C$, of the graphs in Fig.~\ref{mgr} is shown in Fig.~\ref{simfig_ntc_circ}(left). The capacity is calculated by using a convex program to solve~\eqref{ntc_ex}. Note that~$m=1, N=2,3$,~$m=2,N=2,\ldots,5$, and~$m=3,N=2,\ldots,7$ are fully-connected graphs. For fully-connected graphs, the capacity is infinite if we use the estimator in~\eqref{est1} as discussed after Lemma~\ref{lem_ntc_ran}.

We now consider scalar gain estimators. The capacity,~$C_\alpha$, of the~$m$-circulant graphs with scalar observation models is plotted in Fig.~\ref{simfig_ntc_circ}(right) for~$m=1,\ldots,3$ as a function of~$N$. If we restrict ourselves to scalar gain estimators then full capacity is not achievable with the scalar observation models even with the fully-connected graphs.
\begin{figure}
\centering
\subfigure{\includegraphics[width=3in]{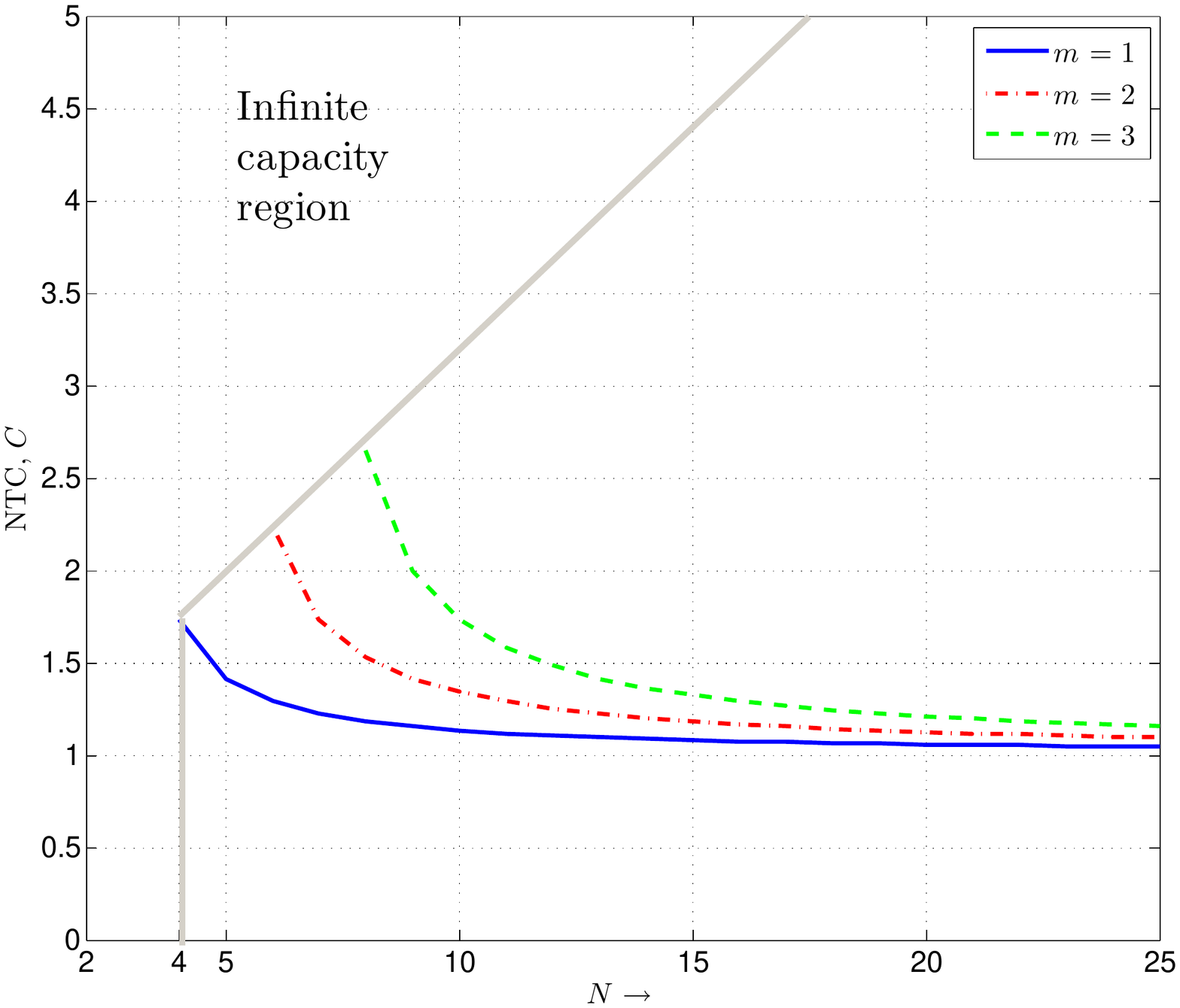}}
\hspace{1cm}
\subfigure{\includegraphics[width=3in]{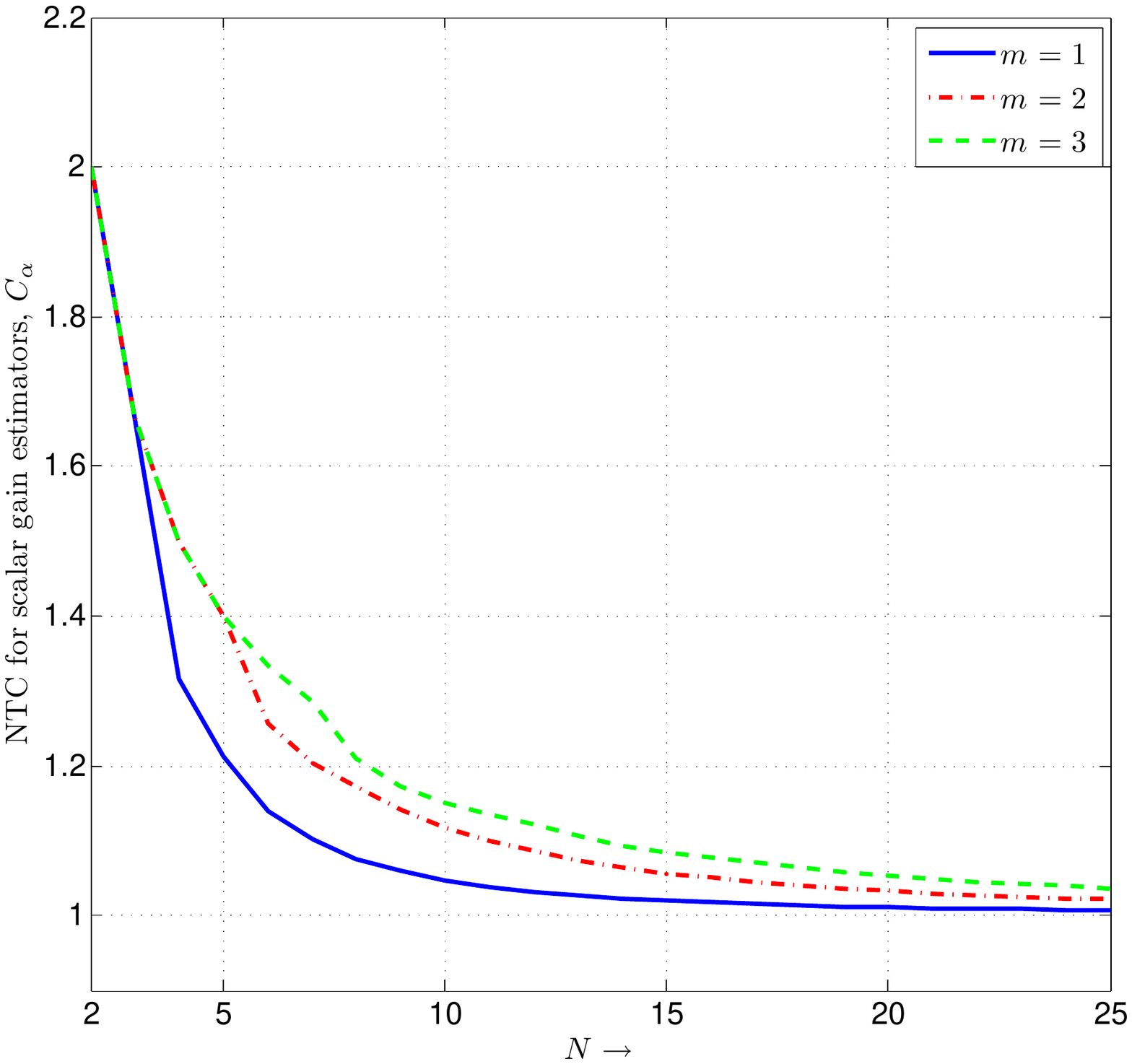}}
\caption{(Left) NTC,~$C$ for~$m$-circulant graphs with scalar observation models. Notice for fully connected graphs the capacity is infinite. (Right) NTC,~$C_\alpha$, for scalar gain estimator and~$m$-circulant graphs with scalar observation models.}
\label{simfig_ntc_circ}
\end{figure}
It can also be noted that the NTC is always greater than unity.

Figs.~\ref{simfig_ntc_circ} explicitly characterize the relation of estimable systems to the underlying (agent) network connectivity. As we increase~$m$ in a circulant graph (with fixed number of nodes,~$N$), the information flow (or the algebraic connectivity) increases resulting into a richer set of estimable systems. The results explicitly show that the system (in)stability plays a vital roles in the estimation capacity of a network. For a weakly-connected network, we pay a price by being restricted to only those dynamical systems that evolve slower.

\subsection{Each agent observable}
Consider each agent~$i$ to be such that~$H_i^TH_i$ is invertible~$\forall~i$. In this case, choose
\begin{eqnarray}
W &=& 0_{nN},\\
B &=& \mbox{blockdiag}[(\sum_{j\in\mc{N}_1}H_j^TH_j)^{-1},\ldots,(\sum_{j\in\mc{N}_N}H_j^TH_j)^{-1}].
\end{eqnarray}
With the above~$W$ and~$B$, we have
\begin{eqnarray}
W - BD_H &=& \mb{0}_{nN},
\end{eqnarray}
and~$C=\infty$, i.e., any dynamical system ($a<\infty$) can be tracked with bounded error. In other words, when each agent is observable then any arbitrary dynamical system can be tracked with bounded error regardless of the network.

\subsection{No observations}
Consider the system to have no observations, i.e.,~$D_H=\mb{0}$. In this case, no matter we choose for~$W$ and~$B$, we will have NTC,~$C=1$ and only stable dynamical system ($a<C_\alpha=1$) can be tracked with bounded MSE. This example shows that if there are no observations, only stable dynamical system can be tracked with bounded error. This is intuitive as any consistent estimate of a stable system results into bounded MSE.

\section{Conclusions}\label{conc}
In this paper, we explore two networked estimator design paradigms based on LMI methods and two-norm relaxations. The premise of our approach is single time-scale algorithms where only one informatione xchange is allowed between each successive system evolution step. We particularly consider vector state-space where the observability is assumed on the collection of all of the agent measurements, i.e., any strict subset of agents may not be necessarily observable. Our formulation considers arbitrary dynamics and is not restricted to stable or neutrally-stable systems.

We first consider the spectral design using a cone cone complementarity linearization algorithm that requires minimal assumptions but may not result in a solution. We then resort to a two-norm relaxation of the spectral radius and provide the Network Tracking Capacity. We show that the proposed networked estimator results into a bounded MSE for all linear dynamical systems whose instability (in the~$2$-norm sense) is strictly less than the NTC. For both procedure, we explicitly provide non-trivial upper bounds on the steady-state covariance and further explore convex procedures that address performance in addition to stability. We then consider a simple networked estimator where the design is restricted to a single parameter that we term as scalar-gain estimators. With the help of scalar-gain estimators, we provide completely local design principles to implement local estimators with bounded MSE.

\appendices

\bibliographystyle{IEEEbib}
\bibliography{bibliography}

\end{document}